\DeclarePairedDelimiter{\ceil}{\lceil}{\rceil}
\DeclarePairedDelimiter{\floor}{\lfloor}{\rfloor}
\newcommand{\mmod}[1]{\ \mathrm{mod}\ #1}
\def\papertitle{Proportionally dense subgraph of maximum size: complexity and approximation}
\title{\papertitle}
\newcommand{\probdef}[6][Question]{
\hbox{\vbox{
\begin{quote}
  \label{#6}
  \ifthenelse{\equal{#4}{}}{}{{#4}\ifthenelse{\equal{#5}{}}{}{ ({#5})}}
  \begin{compactdesc}
    \item [Input] {#2}
    \item [#1] {#3}
  \end{compactdesc}
\end{quote}
}}

}
\crefname{figure}{Figure}{Figures}
\crefname{appendix}{Appendix}{Appendices}
\def\O{\mathcal{O}}
\def\NP{\textsf{NP}}
\def\FPT{\textsf{FPT}}
\def\APX{\textsf{APX}}
\def\MIS{\textsc{Max Independent Set}}
\def\MC{\textsc{Max PDS}}
\def\MPDS{\textsc{Max PDS}}
\def\MClong{\textsc{Max Proportionally Dense Subgraph}}
\def\MPDSlong{\textsc{Max Proportionally Dense Subgraph}}
\newcommand{\overbar}[1]{\mkern 3.5mu\overline{\mkern-3.5mu#1\mkern-1.5mu}\mkern 1.5mu}
\newtheorem{theorem}{Theorem}
\newtheorem{lemma}{Lemma}
\newtheorem{corollary}{Corollary}
\newtheorem{definition}{Definition}
\newtheorem{proposition}{Proposition}
\begin{document}

\author[1]{Cristina Bazgan
\thanks{\href{mailto:bazgan@lamsade.dauphine.fr}{bazgan@lamsade.dauphine.fr}}}
\author[2]{Janka Chleb\'\i kov\'a
\thanks{\href{mailto:janka.chlebikova@port.ac.uk}{janka.chlebikova@port.ac.uk}}}

\author[2]{Cl\'ement Dallard
\thanks{\href{mailto:clement.dallard@port.ac.uk}{clement.dallard@port.ac.uk}}}

\author[1]{Thomas Pontoizeau
\thanks{\href{mailto:thomas.pontoizeau@lamsade.dauphine.fr}{thomas.pontoizeau@lamsade.dauphine.fr}}}

\affil[1]{Universit\'e Paris-Dauphine, Universit\'e PSL, CNRS,  LAMSADE, 75016 Paris, France}
\affil[2]{School of Computing, University of Portsmouth, Portsmouth, United Kingdom}

\date{}

\maketitle

\begin{abstract}
We define a \emph{proportionally dense subgraph} (PDS) as an induced subgraph of a graph with the property that each vertex in the PDS is adjacent to proportionally as many vertices in the subgraph as in the graph.
We prove that the problem of finding a PDS of maximum size is \APX-hard on split graphs, and \NP-hard on bipartite graphs.
We also show that deciding if a PDS is inclusion-wise maximal is \textsf{co-NP}-complete on bipartite graphs.
Nevertheless, we present a simple polynomial-time $(2-\frac{2}{\Delta+1})$-approximation algorithm for the problem, where $\Delta$ is the maximum degree of the graph.
Finally, we show that all Hamiltonian cubic graphs with $n$ vertices (except two) have a PDS of size $\floor{ \frac{2n+1}{3} }$, which we prove to be an upper bound on the size of a PDS in cubic graphs.
\end{abstract}

\textit{Keywords}: {\footnotesize 
dense subgraph, approximation, complexity, Hamiltonian cubic graphs}\\

\section{Introduction}

For a graph $G=(V,E)$, the density of a subgraph on a vertex set $S \subseteq V$ is commonly defined as $\frac{|E(S)|}{|S|}$, where $E(S)$ is the set of edges in the subgraph.
The problem of finding a subgraph of maximum density can be solved in polynomial time using a max flow technique \cite{Goldberg:CSD-84-171}.
However, when the subgraph must contain \emph{exactly} $k$ vertices, the problem becomes \NP-hard \cite{feige2001dense,asahiro2002complexity} and is known as the \textsc{Densest $k$-subgraph} problem.
Two variants of the problem have also been studied where the number of vertices in the subgraph must be either \emph{at least} $k$ or \emph{at most} $k$.
The former is known to be \NP-hard \cite{KhullerSaha2009}, but there exists a polynomial-time $2$-approximation algorithm to solve it \cite{Anderson2007}.
It was showed that any $\alpha$-approximation for the \emph{at most} $k$ variant would imply a $\Theta (\alpha^{2})$-approximation for the densest $k$-subgraph problem \cite{andersen2009finding}.

An induced subgraph on a vertex set $S \subset V$ is said to be \emph{proportionally dense} if all of its vertices in $S$ have \emph{proportionally} as many neighbors in the subgraph as in the graph, and hence the condition $\frac{d_S(u)}{|S|-1}\geq \frac{d(u)}{|V|-1}$ holds for each vertex $u$ in $S$.
In this paper, we study the problem of finding a \emph{proportionally dense subgraph} (PDS) with a maximum number of vertices.
A \emph{proportionally dense subgraph} grants more importance to the vertices than the standard definition of a dense subgraph, as all the vertices in a PDS must be `satisfied', \textit{i.e.}\ respect the above condition.
This can be compared with \emph{defensive alliances} in graphs, where the vertices in the alliance must have at least as many neighbors inside the alliance than outside it \cite{rodriguez2009defensive,kristiansen04alliances}, without the notion of proportion of neighbors.

From a theoretical point of view, it is interesting to observe a problem that connects local and global properties of vertex subsets, interweaving the size of the subset and the number of neighbors.
This interesting paradigm has rarely been seen in graph theory problems.

The notion of proportionality of neighbors is closely related to community detection problems.
\citeauthor{Ols13} \cite{Ols13} defined a \emph{community structure} as a partition of the vertices of a graph into parts such that each vertex has a greater proportion of neighbors in its part than in any other part, each part being called a \emph{community}.
In the same paper, it was proved that any graph that is not a star contains a community structure that can be found in polynomial time (if there is no restriction on the number of communities), but that it is \NP-complete to decide if a given subset of vertices can belong to a same community of a community structure.
The special case where the community structure contains exactly two communities, namely a \emph{$2$-community structure}, has been studied in several classes of graphs: a $2$-community structure always exists and can be found in polynomial time in trees, graphs with maximum degree $3$, minimum degree $|V|-3$, and complements of bipartite graphs \cite{BCP17}.
Recently, the notion of $2$-community structure has been studied under the name of $2$-PDS partition \cite{BazChleDal2018}.
In this paper, the authors described an infinite family of graphs without a $2$-PDS partition, and a second infinite family of graphs without a connected $2$-PDS partition (but with a disconnected one).
These results answer some open questions originally introduced in \cite{BCP17}.
However, the complexity of finding a $2$-PDS partition remains unknown in general graphs, and for larger (fixed) number of PDS's.
As there is equivalence between \emph{proportionally dense subgraph} and \emph{community} (with regard to the above definition), one may interpret the problem of finding a proportionally dense subgraph of maximum size as finding a community of maximum size.
Hence, all the results presented in this paper can also be applied for community related problems.

\cref{prelim} introduces the basic notations used in the paper.
\cref{section:hardness} presents various hardness results of the {\MClong} problem.
\cref{section:approx} gives positive results about the approximation of this problem.
We prove that the the problem can be solved in linear time on Hamiltonian cubic graphs in \cref{section:hamiltonian}.
Conclusion and open problems are given in \cref{section:conclusion}.

\section{Preliminaries}\label{prelim}

Throughout the paper, we assume that all graphs are simple, undirected and connected.
For a graph $G=(V,E)$, we denote by $N(v)$ the set of neighbors of $v\in V$ and by $d(v)$ the degree of $v$, and thus $d(v)=|N(v)|$.
Also, $\Delta (G)$ denotes the maximum degree of $G$ (or simply $\Delta$ when no confusion arises).

In addition, given a subset of vertices $S \subset V$, we define $d_S(v)=|N(v)\cap S|$ and $\overbar{S}:=V\setminus S$;
also, $G[S]$ represents the induced subgraph of $S$ in $G$.
\medskip

A \emph{star} is a complete bipartite graph~$K_{1,\ell}$ for any $\ell\geq 1$.
A \emph{split graph} is a graph in which the vertices can be partitioned into an independent set and a clique.

\bigskip
\noindent\textbf{The Maximum Proportionally Dense Subgraph problem}
\begin{definition}\label{def:PDS}
Let $G=(V,E)$ be a graph and $S \subset V$, such that $2 \leq |S| < |V|$.
We say that the induced subgraph $G[S]$ is a \emph{proportionally dense subgraph} (PDS) if for each vertex $u \in S$,
    \begin{equation}\label{eq:vertex satisfied}
			\frac{d_S(u)}{|S|-1} \geq \frac{ d(u)}{|V|-1}\,,\\ %
			\text{which is equivalent to \ }\\
			\frac{d_S(u)}{|S|-1} \geq \frac{ d_{\overbar{S}}(u)}{|\overbar{S}|}\,.
		\end{equation}
We say a vertex $u$ is \emph{satisfied} (in $G[S]$) if it respects \cref{eq:vertex satisfied}.
The size of the proportionally dense subgraph $G[S]$ corresponds to the cardinality of $S$.
\end{definition}

The proof of the above equivalence from \cref{eq:vertex satisfied} can be found in \cite{BCP17}.

\medskip

\noindent\parbox{\linewidth}{
	\medskip
	\noindent \textsc{Max Proportionally Dense Subgraph} ({\MC})\\
	\noindent \textbf{Input:} A graph $G$.\\
	\noindent \textbf{Output:} A proportionally dense subgraph in $G$ of maximum size.\\
	\medskip
}

A proportionally dense subgraph may be connected or not.
We study both cases and talk about \emph{connected PDS} in the former case.
Notice that there exist graphs for which all proportionally dense subgraphs of maximum size are not connected, even if the graph is a cubic graph or a caterpillar.
In the cubic graph illustrated in \cref{fig:disconnected max com}, the gray vertices represent a PDS of size $7$, which is not connected.
In fact, any connected induced subgraph on the set $S$ with at least $6$ vertices contains at least one vertex $u$ of degree $1$ in $S$, which is not satisfied since $\frac{d_S(u)}{|S|-1} \leq \frac{1}{6-1} < \frac{2}{4} \leq \frac{d_{\overbar{S}}(u)}{|\overbar{S}|}$.
It can be checked that the maximum size for a PDS is $7$ but only $5$ for a connected PDS.
Similarly, in the caterpillar in \cref{fig:disconnected max com}, any connected induced subgraph of size at least $12$ has one vertex unsatisfied.
The maximum size for a PDS is $12$, while only $8$ for a connected PDS.

\begin{figure}[htpb!]
\begin{center}
\begin{tikzpicture}[scale=.7,rotate=0] %

\tikzset{every node/.style={draw,label distance = -3pt,circle,fill=white,inner sep=2.5pt}}
\tikzstyle{gray}=[draw,fill=gray!35,circle,inner sep=2.5pt]

	\begin{scope}
		\node (a) at (0,0) {};
		\node[gray] (b) at (1.5,0) {};
		\node[gray] (c) at (1.5,1) {};
		\node[gray] (d) at (1.5,-1) {};
		\draw (a) to (b);
		\draw (a) to (c);
		\draw (a) to (d);
		\draw (b) to (c);
		\draw (b) to (d);
		
		\node[gray] (e) at (3,0) {};
		\draw (c) to (e);
		\draw (d) to (e);
	
		\node (f) at (4,0) {};
		\draw (e) to (f);
		
		\node (g) at (5.5,-1) {};
		\node[gray] (h) at (5.5,1) {};
		\node[gray] (i) at (5.5,0) {};
		\node[gray] (j) at (7,0) {};
		\draw (j) to (g);
		\draw (j) to (i);
		\draw (j) to (h);
		\draw (i) to (g);
		\draw (i) to (h);
		
		\draw (h) to (f);
		\draw (g) to (f);
	\end{scope}

    \begin{scope}[xshift=9.5cm,yshift=-1cm]
            \node[gray] (a1) at (0,-.50) {};
\node[gray] (a2) at (0,.25) {};
\node[gray] (a3) at (0,1.75) {};
\node[gray] (a4) at (0,2.50) {};
\node[gray] (aa) at (0,1) {};

\node[gray,label={}] (a5) at (1,1) {};
\node[label={}] (ax) at (2,1) {};
\node[label={}] (z) at (3,1) {};
\node[label={}] (bx) at (4,1) {};
\node[gray,label={}] (b5) at (5,1) {};

\node[gray](b1) at (6,-.50) {};
\node[gray] (b2) at (6,.25) {};
\node[gray] (b3) at (6,1.75) {};
\node[gray] (b4) at (6,2.50) {};
\node[gray] (bb) at (6,1) {};

\path[draw,]  (a5) -- (ax) -- (z) -- (bx) -- (b5) {};
\draw[] (a1) -- (a5);
\draw[] (a2) -- (a5);
\draw[] (a3) -- (a5);
\draw[] (a4) -- (a5);
\draw[] (aa) -- (a5);

\draw[] (b1) -- (b5);
\draw[] (b2) -- (b5);
\draw[] (b3) -- (b5);
\draw[] (b4) -- (b5);
\draw[] (bb) -- (b5);

    \end{scope}

\end{tikzpicture}
 
	\end{center}
\caption{Two graphs in which all PDS of maximum size are not connected. Gray vertices represent a PDS of maximum size in each graph.}\label{fig:disconnected max com}
\end{figure}
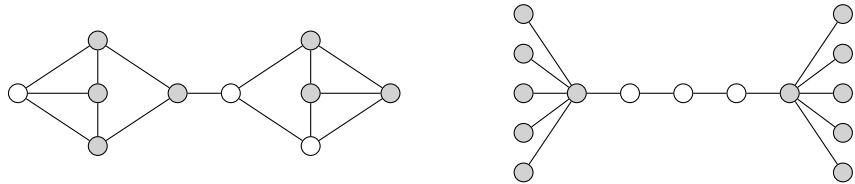

\noindent %

\section{Hardness results}\label{section:hardness}

In this section we prove several hardness results for {\MPDS} on split and bipartite graphs and further extend the results to prove that deciding if a PDS is inclusion-wise maximal is \textsf{co-NP}-complete.

We construct two polynomial-time reductions from {\MIS}, which is known to be \NP-hard \cite{Karp1972}.

\noindent\parbox{\linewidth}{
	\medskip
	\noindent {\MIS}\\
	\noindent \textbf{Input:} A graph $G$.\\
	\noindent \textbf{Output:} A subset of pairwise non-adjacent vertices in $G$ of maximum size.\\
}

\subsection{Split graphs}

We first describe a polynomial-time reduction, and then prove two intermediate results allowing us to easily prove the \NP-hardness of {\MPDS} on split graphs.

\begin{definition}\label[definition]{definition: split reduction}
	Let $G=(V,E)$ be a graph not isomorphic to a star.
	We define the construction $\sigma$ transforming the graph $G$ into $G':= \sigma(G)$, where $G'=(V',E')$ is defined as follows:
    \begin{compactitem}
		\item $V' := \{z_1,z_2\} \cup M \cup N$, where $N := V$, $M := \{ uv : \{u,v\} \in E\}$ and $z_1$, $z_2$ are two additional vertices;

		\item for each $e \in M$ and each $u \in N$, the edge $\{e,u\}\in E'$ if and only if $u \notin e$;

		\item the set $M \cup \{z_1,z_2\}$ induces a clique in $G'$.
	\end{compactitem}
\end{definition}

Obviously, the construction $\sigma$ can be done in polynomial time.
Notice that $G'$ is a split graph, and is connected if and only if $G$ is not isomorphic to a star.
See \cref{fig:transformation} for an example.

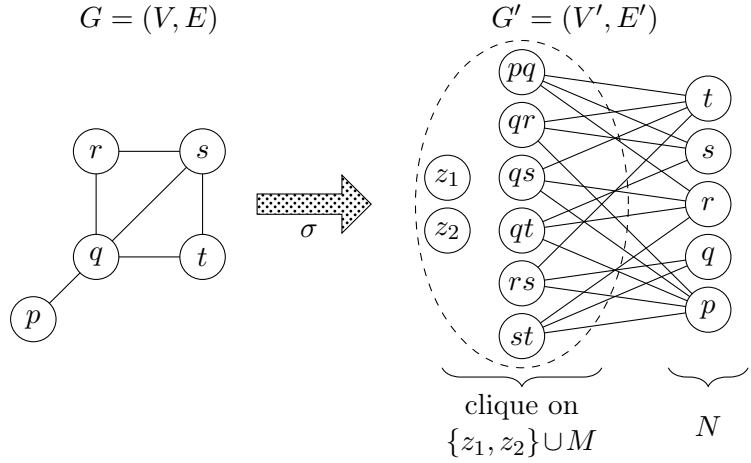
\begin{figure}[ht]
	\centering
\begin{tikzpicture}[scale=0.7]
	\begin{scope}[every node/.style={auto,circle,draw,minimum size=17pt,inner sep=0,outer sep=0}]
		\node (A) at (-.18,-.18) {$p$};
		\node[] (B) at (1,1) {$q$};
		\node[] (C) at (1,3) {$r$};
		\node[] (D) at (3,3) {$s$};
		\node (E) at (3,1) {$t$};
		
	\end{scope}
	\node[rectangle] (G) at (2,5.5) {$G=(V,E)$};

	\begin{scope}[every path/.style={}]
		\draw (A) -- (B);
		\draw (B) -- (C);
		\draw (B) -- (E);
		\draw (C) -- (D);
        \draw (D) -- (B);
		\draw (D) -- (E);
	\end{scope}

	\draw node[single arrow,draw=black,pattern=crosshatch dots,minimum height=1.5cm,label=below:$\sigma$] at (5,2) {};
	
	\begin{scope}[xshift=7.5cm,every node/.style={auto,circle,draw,minimum size=17pt,inner sep=0,outer sep=0}]
		\node (Z1) at (0.1,2.5) {$z_1$};
		\node (Z2) at (0.1,1.5) {$z_2$};
		
		\node (AB) at (1.5,4.5) {$pq$};
		\node (BC) at (1.5,3.5) {$qr$};
        \node (BD) at (1.5,2.5) {$qs$};
		\node (BE) at (1.5,1.5) {$qt$};
		\node (CD) at (1.5,0.5) {$rs$};
		\node (DE) at (1.5,-0.5) {$st$};

		\draw[dashed] (1.5,2) ellipse (2cm and 3.1cm);
		\draw [decorate,decoration={brace,amplitude=7pt}] (3,-1.1) -- (0,-1.1);
        \draw node[draw=none,rectangle, text width=2cm, text centered] at (1.5,-2.2) {clique on $\{z_1,z_2\} \cup M$};
        
        \draw [decorate,decoration={brace,amplitude=7pt}] (5.75,-1.1) -- (4.25,-1.1);
        \draw node[draw=none,rectangle, text centered] at (5,-2.2) {$N$};

		\node (E) at (5,4) {$t$};
		\node[] (D) at (5,3) {$s$};
		\node[] (C) at (5,2) {$r$};
		\node[] (B) at (5,1) {$q$};
		\node (A) at (5,0) {$p$};
	
		\begin{scope}[every path/.style={}]
			\draw (A) -- (BC);
			\draw (A) -- (BE);
			\draw (A) -- (CD);
            \draw (A) -- (BD);
			\draw (A) -- (DE);
			
			\draw (B) -- (CD);
			\draw (B) -- (DE);

			\draw (C) -- (AB);
			\draw (C) -- (BE);
			\draw (C) -- (DE);
			\draw (C) -- (BD);
			
			\draw (D) -- (AB);
			\draw (D) -- (BC);
			\draw (D) -- (BE);

			\draw (E) -- (AB);
			\draw (E) -- (BC);
			\draw (E) -- (CD);
            \draw (E) -- (BD);
		\end{scope}
	\node[draw=none,rectangle] (G') at (2.5,5.5) {$G'=(V',E')$};
	\end{scope}

\end{tikzpicture} 	\caption{The graph $G'$ obtained from the graph $G$ using the transformation $\sigma$.}\label{fig:transformation}
\end{figure}

\begin{lemma}\label[lemma]{lemma: split at least one non-neighbor}
 Let $G=(V,E)$ be a graph not isomorphic to a star and let $G'=(V',E')$ be such that $G' = \sigma(G)$.
 Let $S \subset V'$ be a set of vertices such that $M \cup \{z_1,z_2\} \subseteq S$.
	Then a vertex $e \in M$ is satisfied in $G'[S]$ if and only if $d_S(e) \geq |S|-2$.
\end{lemma}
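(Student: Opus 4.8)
The plan is to translate the satisfaction inequality for $e$ into an elementary arithmetic condition by computing every degree explicitly, and then to recognise that the threshold $|S|-2$ corresponds exactly to the number of endpoints of $e$ that lie in $S$. First I would record the global data of $G'$. Since $M\cup\{z_1,z_2\}$ is a clique of size $|M|+2$ and each $e=uv\in M$ is joined to precisely those vertices of $N$ different from $u$ and $v$, the degree of $e$ in $G'$ is $d(e)=(|M|+1)+(|N|-2)=|M|+|N|-1$, while $|V'|-1=|M|+|N|+1$.

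Next I would compute $d_S(e)$. Because $M\cup\{z_1,z_2\}\subseteq S$, the vertex $e$ already sees all $|M|+1$ other clique vertices inside $S$; its remaining neighbours lie in $N$, and within $S$ these are exactly the vertices of $S\cap N$ that are not endpoints of $e$. Writing $k\in\{0,1,2\}$ for the number of endpoints of $e$ lying in $S$, and using $|S|=|M|+2+|S\cap N|$, a short count yields the key identity $d_S(e)=|S|-1-k$. Consequently the arithmetic condition $d_S(e)\geq |S|-2$ is equivalent to $k\leq 1$, so it remains only to show that $e$ is satisfied in $G'[S]$ if and only if $k\leq 1$.

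For this I would use the second form of \cref{eq:vertex satisfied}, namely $\frac{d_S(e)}{|S|-1}\geq\frac{d_{\overbar{S}}(e)}{|\overbar{S}|}$, since it makes the $k=2$ case transparent. As $M\cup\{z_1,z_2\}\subseteq S$ forces $\overbar{S}\subseteq N$, the neighbours of $e$ in $\overbar{S}$ are exactly its non-endpoints there, giving $d_{\overbar{S}}(e)=|\overbar{S}|-(2-k)$; substituting $d_S(e)=|S|-1-k$, the inequality simplifies to $\frac{k}{|S|-1}\leq\frac{2-k}{|\overbar{S}|}$. For $k=0$ the right side is positive and the left side is $0$, so $e$ is satisfied; for $k=2$ the right side is $0$ while the left side is strictly positive, so $e$ is not satisfied; and the boundary case $k=1$ reduces to $|\overbar{S}|\leq |S|-1$.

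Finally, verifying these cases relies on two structural facts, and here lies the only non-routine point. For the right-hand fraction to be well defined (and to equal $0$ when $k=2$) I need $|\overbar{S}|>0$; this is precisely where the hypothesis $S\subsetneq V'$ enters, since $\overbar{S}\subseteq N$ together with $S\neq V'$ gives $\overbar{S}\neq\emptyset$. The case demanding care is $k=1$, where I must certify $|\overbar{S}|\leq |S|-1$: rewriting in terms of cardinalities this becomes $|N|\leq |M|+1+2|S\cap N|$, which holds because $G$ is connected and hence $|M|=|E|\geq |V|-1=|N|-1$. This single use of connectivity is the main thing to get right; the rest is bookkeeping.
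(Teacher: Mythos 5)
Your proof is correct and follows essentially the same route as the paper's: both reduce the satisfaction inequality to counting the endpoints of $e$ lying in $S$ (your parameter $k$), observe that $d_S(e)\geq|S|-2$ fails exactly when both endpoints are in $S$ (forcing $d_{\overbar{S}}(e)=|\overbar{S}|$), and settle the remaining case via connectivity of $G$, i.e.\ $|M|\geq|N|-1$, which gives $|\overbar{S}|\leq|S|-1$. Your explicit case split on $k\in\{0,1,2\}$ is just a cleaner organization of the paper's two-bullet argument.
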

\begin{proof}
    A vertex $e \in M$ has degree $d(e) = |V'| - 3$.
	Hence, if $d_S(e) < |S|-2$, then $d_{\overbar{S}}(e) = |\overbar{S}|$ and $e$ is not satisfied in $G'[S]$ as it does not respect \cref{eq:vertex satisfied}.
	However, if $d_S(e) \geq |S|-2$, then $d_{\overbar{S}}(e) < |\overbar{S}|$.
    Also, since $G$ is connected, $|M| \geq |N|-1$, and hence $|S| \geq |M|+2 > |N| \geq |\overbar{S}|$ and we have
	\[
	|\overbar{S}| \cdot d_S(e) \geq
	|\overbar{S}| \cdot (|S| - 2) \geq
	(|\overbar{S}| -1) \cdot (|S| - 1) \geq
	(|S| - 1) \cdot d_{\overbar{S}}(e) \,,
	\]
	and thus $e$ is satisfied in $G'[S]$.
\end{proof}

\begin{lemma}\label[lemma]{lemma: split one can find bigger}
	Let $G=(V,E)$ be a graph not isomorphic to a star and let $G'=(V',E')$ be such that $G' = \sigma(G)$.
	Let $S_1 \subset V'$ such that $G'[S_1]$ is a PDS.
	Then, there exists $S_2 \subset V'$ such that $G'[S_2]$ is a PDS, $|S_2| \geq |S_1|$ and $M\cup \{z_1,z_2\} \subseteq S_2$.
	Moreover, $S_2$ can be found in polynomial time.
\end{lemma}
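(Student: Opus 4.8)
The plan is to build $S_2$ explicitly from $S_1$ by throwing away all of $N$ except a carefully chosen independent subset and then inserting the entire clique $M\cup\{z_1,z_2\}$. Write $M_1=S_1\cap M$, $Z_1=S_1\cap\{z_1,z_2\}$ and $T_1=S_1\cap N$, so $|S_1|=|M_1|+|Z_1|+|T_1|$. First I would establish the following clean characterization: a set of the form $M\cup\{z_1,z_2\}\cup T$ with $T\subseteq N$ and $T\neq N$ induces a PDS exactly when $T$ is an independent set of $G$. Indeed, since $M\cup\{z_1,z_2\}\subseteq S$, \cref{lemma: split at least one non-neighbor} says $e\in M$ is satisfied iff $d_S(e)\geq|S|-2$; a short count gives $d_S(e)=(|M|+1)+|T|-|T\cap e|$ and $|S|-2=|M|+|T|$, so $e$ is satisfied iff at most one endpoint of $e$ lies in $T$, i.e.\ iff $T$ is independent. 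The vertices $z_1,z_2$ and every vertex of $T$ are automatically satisfied, because each has its whole neighborhood inside $M\cup\{z_1,z_2\}\subseteq S$, so that $d_S(\cdot)=d(\cdot)$ while $|S|-1\leq|V'|-1$.

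The crux is then to produce, in polynomial time, an independent set $T\subseteq T_1$ with $|T|\geq|S_1|-|M|-2$; the set $S_2:=M\cup\{z_1,z_2\}\cup T$ will do the job. The key step — and the only place that uses the hypothesis that $G'[S_1]$ is a PDS — is to bound the number of edges inside $T_1$. I claim that whenever an edge $\{u,v\}\in E$ has both endpoints in $T_1$, the corresponding vertex $uv\in M$ cannot belong to $S_1$. Assuming it does, counting the neighbors of $uv$ inside $S_1$ gives $d_{S_1}(uv)=|S_1|-3$ (it misses only $uv$, $u$ and $v$, and the latter two lie in $T_1\subseteq S_1$), hence $d_{\overbar{S_1}}(uv)=d(uv)-d_{S_1}(uv)=(|V'|-3)-(|S_1|-3)=|\overbar{S_1}|$. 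But then \cref{eq:vertex satisfied} would demand $\frac{|S_1|-3}{|S_1|-1}\geq 1$, which is false, contradicting that $uv$ is satisfied. Distinct edges of $G[T_1]$ give distinct vertices of $M$, so $|E(G[T_1])|\leq|M\setminus M_1|=|M|-|M_1|$.

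With this edge bound the rest is routine. Greedily deleting one endpoint of each surviving edge turns $T_1$ into an independent set $T$ while removing at most $|E(G[T_1])|\leq|M|-|M_1|$ vertices, so $|T|\geq|T_1|-(|M|-|M_1|)$ and $T$ is obtained in polynomial time. Taking $S_2=M\cup\{z_1,z_2\}\cup T$, the characterization above shows $G'[S_2]$ is a PDS: $T$ is independent, and $T\subsetneq N$ because $G$ is connected with at least one edge, so $2\leq|M|+2\leq|S_2|<|V'|$. Finally $|S_2|=|M|+2+|T|\geq|M|+2+|T_1|-(|M|-|M_1|)=|M_1|+|T_1|+2\geq|M_1|+|Z_1|+|T_1|=|S_1|$, using $|Z_1|\leq 2$.

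I expect the main obstacle to be isolating the right invariant rather than any hard computation: the argument hinges on the non-obvious fact that an $M$-vertex adjacent to both endpoints of an internal edge of $T_1$ ends up adjacent to \emph{every} vertex outside $S_1$ and so fails \cref{eq:vertex satisfied}, which is what forces edges of $G[T_1]$ to correspond to $M$-vertices missing from $S_1$. Getting the bookkeeping for $d_{S_1}(uv)$ exactly right, and recognizing that \cref{lemma: split at least one non-neighbor} cleanly disposes of the satisfaction of the $M$-vertices in the rebuilt set once the full clique is present, are the points requiring care; everything else is elementary counting.
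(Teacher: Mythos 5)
Your proof is correct and ends up constructing essentially the same set as the paper --- the clique $M\cup\{z_1,z_2\}$ plus a trimmed subset of $S_1\cap N$, with one $N$-vertex removed per offending $M$-vertex --- but you justify the crucial counting bound by a genuinely different argument. The paper first rules out $N\subseteq S_1$ by a separate case analysis, sets $S_2:=S_1\cup M\cup\{z_1,z_2\}$, shows via a monotonicity chain of inequalities (combined with \cref{lemma: split at least one non-neighbor}) that every vertex of $S_1\cap M$ remains satisfied after the enlargement and under later removals, and concludes that only vertices of $M\setminus S_1$ can be unsatisfied, each repaired by transferring one vertex of $S_2\cap N$ out. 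You instead prove the contrapositive fact directly on $S_1$: an edge of $G$ with both endpoints in $S_1\cap N$ would make its $M$-vertex adjacent to \emph{all} of $\overbar{S}_1$, violating \cref{eq:vertex satisfied} in $S_1$ itself; hence the edges of $G[S_1\cap N]$ inject into $M\setminus S_1$, and deleting one endpoint per edge costs at most $|M|-|M_1|$ vertices. The two bounds coincide and yield the same size accounting $|S_2|\geq |M_1|+|T_1|+2\geq |S_1|$, but your route is more self-contained: it dispenses with the satisfaction-preservation step, and it replaces the paper's preliminary analysis of $N\subseteq S_1$ by the neat observation that your final $T$ is independent in $G$ while $G$ has an edge, so $T\neq N$ and $S_2\neq V'$. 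Your explicit characterization --- a set $M\cup\{z_1,z_2\}\cup T$ with $T\subsetneq N$ induces a PDS if and only if $T$ is independent in $G$ --- is also a useful crystallization of a fact the paper leaves implicit and only exploits later, inside the proof of \cref{theorem: max pds hard on split}.
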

\begin{proof}
    Firstly, we show that $N\nsubseteq S_1$.
	\begin{compactitem}
	\item if $S_1 = N$, since $G'[N]$ is an independent set, then any vertex $u\in S_1$ has $d_{S_1}(u)=0$ and $d_{\overbar{S}_1}(u) >0$; hence $u$ does not satisfy \cref{eq:vertex satisfied} and $G'[S_1]$ is not a PDS;

	\item if $N\subset S_1$, then $\overbar{S}_1$ is a subset of the clique $M \cup \{z_1,z_2\}$; it means any vertex $u \in S_1 \cap (M \cup \{z_1,z_2\})$ has $d_{\overbar{S}_1}(u) = |\overbar{S}_1|$  and $d_{S_1}(u) < |S_1|-2$, and thus
	$$ |\overbar{S}_1| \cdot d_{S_1}(u) < |\overbar{S}_1| \cdot (|S_1|-2) < |\overbar{S}_1| \cdot (|S_1|-1) = (|S_1|-1) \cdot d_{\overbar{S}_1}(u) \,,$$
	so $u$ does not satisfy \cref{eq:vertex satisfied} and $G'[S_1]$ is not a PDS.
	\end{compactitem}

   Now, let $S_2 := S_1 \cup M \cup \{z_1,z_2\}$ and $\overbar{S}_2 := V' \setminus S_2$.
   
   Observe that for any $f \in S_1 \cap M$, $d_{S_2}(f) - d_{S_1}(f) = |S_2| - |S_1| \geq 0$ and $d_{\overbar{S}_2}(f) \leq d_{\overbar{S}_1}(f)$.
   Thereby, we obtain $\frac{d_{S_2}(f)}{|S_2|-1} \geq \frac{d_{S_1}(f)}{|S_1|-1}\geq \frac{d_{\overbar{S}_1}(f)}{|S_1|}\geq \frac{d_{\overbar{S}_2}(f)}{|S_2|}$, so $f$ is satisfied in $S_2$.
   Also, if a vertex in $M$ is satisfied in $S_2$, then according to \cref{lemma: split at least one non-neighbor} it is also satisfied in any $S_2' \subseteq S_2$, as long as $M \cup \{z_1,z_2\} \subseteq S_2'$.
   
   If there exists $e \in M \setminus S_1$ which is not satisfied in $S_2$, then following \cref{lemma: split at least one non-neighbor} it holds $d_{S_2}(e) < |S_2|-2$.
   Thus, there exists a vertex $u \in S_2 \cap N$, non-adjacent to $e$, which we can transfer from $S_2$ to $\overbar{S}_2$.
   Obviously, at most $|M \setminus S_1|$ transfers are needed to satisfy all the vertices in $S_2$, and thus $|S_2| \geq |S_1|$ holds true.
   Since $S_2 \cap N \subseteq S_1 \cap N$ and $N \nsubseteq S_1$, then $S_2 \neq V'$.
   
   Note that $\overbar{S}_2 \subseteq N$ and that each vertex $u \in S_1 \cap N$ is satisfied in $S_2$, since $d_{\overbar{S}_2}(u) = 0$.
   Clearly, $z_1$ and $z_2$ are satisfied in $S_2$.
   Thus, $G'[S_2]$ is a PDS, and it can be found in polynomial time.
\end{proof}

Notice that \cref{lemma: split one can find bigger} implies that there exists a PDS of maximum size in $G'$ that is connected.
Hence, the following result also holds when looking for a connected PDS.

\begin{theorem}\label{theorem: max pds hard on split}
	{\MPDSlong} is \NP-hard on split graphs.
\end{theorem}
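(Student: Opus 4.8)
The plan is to show that the construction $\sigma$ is an exact reduction from {\MIS}: a maximum independent set of $G$ corresponds to a maximum PDS of $G'$ and vice versa. Since {\MIS} remains \NP-hard on connected graphs that are not stars (stars being trivially solvable, and the hardness surviving the restriction to connected instances), and $\sigma$ produces a connected split graph on such inputs in polynomial time, this yields the theorem.

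First I would use \cref{lemma: split one can find bigger} to reduce to PDS's containing the whole clique. That lemma turns any PDS $G'[S_1]$ into a PDS $G'[S_2]$ with $|S_2|\geq|S_1|$ and $M\cup\{z_1,z_2\}\subseteq S_2$, so a maximum PDS of this special shape always exists. It therefore suffices to analyse sets of the form $S = M\cup\{z_1,z_2\}\cup T$ with $T\subseteq N$, for which $|S| = |M|+2+|T|$ and $|S|-2 = |M|+|T|$.

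Next I would characterise when such an $S$ induces a PDS by checking each vertex class. The vertices $z_1,z_2$ have all their neighbours inside the clique $M\cup\{z_1,z_2\}\subseteq S$, so $d_{\overbar{S}}(z_i)=0$ and they are satisfied; likewise every $u\in T\subseteq N$ is adjacent only to vertices of $M\subseteq S$, so $d_{\overbar{S}}(u)=0$ and $u$ is satisfied. The only nontrivial class is $M$, and here \cref{lemma: split at least one non-neighbor} does the crucial work: a vertex $e=uv\in M$ is satisfied if and only if $d_S(e)\geq|S|-2$. Counting the neighbours of $e$ inside $S$ — every other clique vertex together with the vertices of $T$ distinct from $u$ and $v$ — gives $d_S(e) = (|M|+1) + (|T| - |T\cap\{u,v\}|)$, so the threshold $d_S(e)\geq|S|-2=|M|+|T|$ is equivalent to $|T\cap\{u,v\}|\leq 1$. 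Ranging over all $e\in M$, i.e. all edges of $G$, the set $S$ is a PDS exactly when $T$ contains at most one endpoint of each edge, that is, when $T$ is an independent set of $G$.

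Finally I would combine these observations. A PDS of the special form has maximum size $|M|+2+\alpha(G)$, and by the first step this equals the maximum PDS size of $G'$; concretely, $G$ has an independent set of size $k$ iff $G'$ has a PDS of size $|M|+2+k$. The validity constraints $2\leq|S|<|V'|$ cause no difficulty, since $G$ has an edge forces $T\subsetneq N$ and hence $|S|<|V'|$. I expect the main point requiring care to be the degree bookkeeping for $e\in M$ in tandem with \cref{lemma: split at least one non-neighbor}: that lemma is precisely what collapses the proportionality inequality for clique vertices into the clean threshold $d_S(e)\geq|S|-2$, and pinning down the $-|T\cap\{u,v\}|$ term correctly is exactly where the independent-set condition is born.
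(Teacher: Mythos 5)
Your proposal is correct and follows essentially the same route as the paper: the same reduction $\sigma$, \cref{lemma: split one can find bigger} to normalise a PDS so that it contains $M\cup\{z_1,z_2\}$, and \cref{lemma: split at least one non-neighbor} to translate satisfaction of a clique vertex $e=uv$ into the threshold $d_S(e)\geq|S|-2$, which your (accurate) count $d_S(e)=|M|+1+|T|-|T\cap\{u,v\}|$ turns into the independent-set condition $|T\cap\{u,v\}|\leq 1$. The paper argues the two directions of the equivalence ``independent set of size $\geq k$ iff PDS of size $\geq|M|+2+k$'' separately rather than stating your clean characterisation, but the content is identical.
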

\begin{proof}
  Let $G=(V,E)$ be a graph not isomorphic to a star, $G' = (V',E')$ be such that $G' = \sigma(G)$, and $k \in \{1,\dots,|V|-2\}$.
  Notice that since $G$ is connected and not isomorphic to a star, then there is no independent set of size $|V|-1$ in $G$.
  We claim that there is an independent set of size at least $k$ in $G$ if and only if there is a PDS of size at least $|M|+2+k$ in $G'$.

  Let $R$ be an independent set of $G$ of size at least $k$.
  In $G'$, we define $S := M \cup \{z_1,z_2\} \cup R$ and $\overbar{S} := V '\setminus S$.
  First, note that $R \subseteq N$ thus $\overbar{S} = N \setminus R$.
  The vertices in $S \cap N \cup \{z_2,z_2\}$ are obviously satisfied in $G'[S]$ as they only have neighbors in $S$.
  Hence, if there exist unsatisfied vertices, then they must be from the set $M$.
  Choose a vertex $e \in M$.
  Since $R$ is an independent set of $G$, then for each edge $e=\{u,v\} \in E$  at most one of the vertices $u$ and $v$ belongs to $R$.
  Hence, the vertex $e \in M$ is not adjacent to at most one vertex in $S$, and thus $d_S(e) \geq |S|-2$.
  According to \cref{lemma: split at least one non-neighbor}, the vertex $e$ is satisfied in $G[S]$.
  Thus, $G[S]$ is a PDS of size at least $|M|+2+k$.

  Let $S\subset V'$ be of size at least $|M|+2+k$ such that $G'[S]$ is a PDS.
  According to \cref{lemma: split one can find bigger}, there exists $S' \subset V'$ such that $G'[S']$ is a PDS, $|S'| \geq |S|$ and $\{z_1,z_2\} \cup M \subseteq S'$.
  We claim that $R' := S' \cap N$ is an independent set of $G$ of size at least $k$.
  Obviously $|R'| \geq k$.
  Moreover, \cref{lemma: split at least one non-neighbor} states that for all satisfied vertices $e \in M$, $d_{S'}(e) \geq |S'|-2$.
  Hence, for each vertex $e \in M$ there is at most one vertex $u \in S'$ that is not adjacent to $e$.
  Since the vertices $e \in M$ and $u \in N$ are not adjacent in $G'$, it implies that $u \in e$ in $G$, and therefore the edge $e \in E$ has at most one endpoint $u \in R'$ in the graph $G$.
  Thus, $R'$ is an independent set of size at least $k$.
\end{proof}

\begin{proposition}\label{prop:apx-hard}
It is \NP-hard to approximate {\MPDSlong} within $1.0026028$ on split graphs, and hence the problem is \APX-hard (even on split graphs).
\end{proposition}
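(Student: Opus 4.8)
The plan is to upgrade the construction $\sigma$ into a gap-preserving reduction. The proof of \cref{theorem: max pds hard on split} in fact yields more than \NP-hardness: combined with \cref{lemma: split one can find bigger}, it shows that the maximum size of a PDS in $G' = \sigma(G)$ is \emph{exactly} $|M| + 2 + \alpha(G)$, where $\alpha(G)$ denotes the independence number of $G$ (take $k = \alpha(G)$ for the lower bound, and $k = \alpha(G)+1$ for the matching upper bound). This is an affine correspondence between the two optima, so $\sigma$ transports gaps — but only up to the additive term $|M| + 2$. On general graphs $|M|$ can be as large as $\binom{|V|}{2}$ while $\alpha(G)$ stays small, which would wash out any constant multiplicative gap. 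The first step is therefore to restrict the source instances to a class on which $|M|$ is linear in $|V|$ and $\alpha(G) = \Theta(|V|)$.

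Bounded-degree graphs form exactly such a class, and I would reduce from \MIS\ on graphs of small bounded degree, with cubic ($3$-regular) graphs as the natural candidate since they minimise the dilution factor $\Delta/2 = |M|/|V|$. One first checks that $\sigma$ is well behaved there: a cubic graph on at least five vertices is connected and not isomorphic to a star, so $G'$ is a connected split graph and the correspondence of \cref{theorem: max pds hard on split}, together with \cref{lemma: split at least one non-neighbor}, applies verbatim.

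Next I would invoke an explicit inapproximability result for \MIS\ on cubic graphs: for suitable constants $p > q$ it is \NP-hard to decide whether a cubic graph $G$ on $n = |V|$ vertices satisfies $\alpha(G) \geq p\,n$ (completeness) or $\alpha(G) \leq q\,n$ (soundness). Pushing this gap through the affine correspondence and using $|M| = \tfrac{3}{2}n$, the completeness case produces a PDS of size at least $(\tfrac{3}{2} + p)\,n + 2$, while in the soundness case no PDS exceeds $(\tfrac{3}{2} + q)\,n + 2$. The ratio of these bounds tends to
\[
\frac{3/2 + p}{3/2 + q}
\]
as $n \to \infty$, a fixed constant strictly larger than $1$; substituting the sharpest available values of $p$ and $q$ is what yields the stated threshold $1.0026028$. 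Since any constant $>1$ already gives \APX-hardness, the \APX-hardness of {\MPDSlong} on split graphs follows.

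The main obstacle is precisely this additive dilution: the term $|M| + 2$ shrinks the relative gap dramatically, so a careless choice of source class or source constants would leave a ratio indistinguishable from $1$. The real work is therefore twofold: (i) selecting the sparsest graphs on which a strong explicit \MIS\ gap is known, so that $\Delta/2$ is as small as possible (there is a genuine trade-off, since a larger degree may come with a larger raw gap $p-q$, and one must optimise over the available bounded-degree hardness results); and (ii) carrying out the arithmetic with those constants to confirm the resulting ratio is at least $1.0026028$. A secondary, routine point is to verify that the hardness instances meet the structural hypotheses of $\sigma$ — connected and not a star — which holds automatically for cubic graphs.
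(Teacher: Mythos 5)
Your proposal takes essentially the same route as the paper: the paper also reduces from the gap version of {\MIS} on cubic graphs (instantiating your constants $p$ and $q$ with the explicit hardness bounds of Chleb\'ik and Chleb\'ikov\'a, namely $p=\frac{12\tau+12}{24\tau+28}$ and $q=\frac{12\tau+11}{24\tau+28}$ with $\tau\leq 6.9$, up to $\varepsilon$ terms), pushes the gap through the affine correspondence $opt(I') = |M| + 2 + opt(I)$ with $|M| = \frac{3}{2}|V|$, and arrives at exactly your ratio $\frac{3/2+p}{3/2+q} = \frac{48\tau+54}{48\tau+53} \approx 1.0026028$. The only substantive step you leave open---which concrete cubic-{\MIS} gap to plug in---is precisely what the citation supplies, so your outline matches the paper's proof.
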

\begin{proof}
Let $I$ be an instance of {\MIS} on a cubic graph $G=(V,E)$.
It is known that it is \NP-hard to decide whether $opt(I)< \frac{12\tau+11+2\varepsilon}{24\tau+28} \cdot |V|$ or $opt(I)>\frac{12\tau+12-2\varepsilon}{24\tau+28} \cdot |V|$, for any $\varepsilon>0$, where $\tau\leq6.9$ \cite{chleb2006apx}.

We construct an instance $I'$ of {\MC} defined on the graph $G'=(V',E')$ such that $G' = \sigma(G)$.
Note that $M \subset V'$ is of size $|E|$, that is $|M| = |E| = \frac{3|V|}{2}$ since $G$ is cubic.
From \cref{theorem: max pds hard on split}, we know that $opt(I') = |M| + 2 + opt(I)$.
Consequently, it is \NP-hard to decide whether $opt(I') < |M| + 2 + \frac{12\tau+11+2\varepsilon}{24\tau+28} \cdot |V| = \frac{48\tau+53+2\varepsilon}{24\tau+28} \cdot |V| +2$ or $opt(I') > |M| + 2 + \frac{12\tau+12-2\varepsilon}{24\tau+28} = \frac{48\tau+54-2\varepsilon}{24\tau+28} \cdot |V|+2$.
We obtain that it is \NP-hard to approximate {\MC} within $1.0026028$.
\end{proof}

\subsection{Bipartite graphs}

In the following, we modify the previous construction in order to prove the \NP-hardness of {\MPDS} on bipartite graph.
The reduction will also be used to show the \NP-hardness of an ``extension version'' of the problem, implying the \textsf{co-NP}-completeness of deciding if a PDS is inclusion-wise maximal.

\begin{definition}\label[definition]{definition: bipartite global reduction}
	Let $G=(V,E)$ be a graph not isomorphic to a star, and an integer $k$ such that $1 \leq k < |V|-1$.
	We define the construction $\beta$ transforming the graph $G$ into $G':= \beta(G,k)$, where $G'=(V',E')$ is defined as follows:
	\begin{compactitem}
		\item $V' := L \cup M \cup N$, where $N := V$, $M := \{ uv : \{u,v\} \in E\}$ and $L$ contains $|L| := |M| \cdot (|V|-k-1)-k+1$ additional vertices;

		\item for each $e \in M$ and each $u \in N$, the edge $\{e,u\}\in E'$ if and only if $u \notin e$;

		\item for each $e \in M$ and each $v \in L$, the edge $\{e,v\} \in E'$.
	\end{compactitem}
\end{definition}

Obviously, the construction $\beta$ can be done in polynomial time.
Clearly, $G'$ is connected if and only if the input graph is not isomorphic to a star.
Also, notice that $G'$ is a bipartite graph as there are edges only between $M$ and $L \cup N$.
See \cref{fig:extension} for an example.

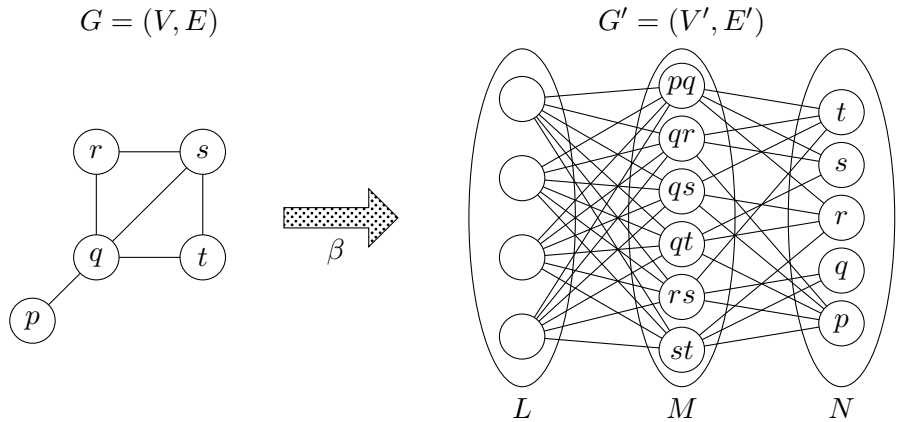
\begin{figure}[htb]
	\begin{center}
\begin{tikzpicture}[scale=0.7]
	\begin{scope}[yshift=-0.75cm,every node/.style={auto,circle,draw,minimum size=17pt,inner sep=0,outer sep=0}]
		\node (A) at (-6.2,-1.2) {$p$};
		\node[] (B) at (-5,0) {$q$};
		\node[] (C) at (-5,2) {$r$};
		\node[] (D) at (-3,2) {$s$};
		\node (E) at (-3,0) {$t$};
		
	\end{scope}
	\node[rectangle] (G) at (-4,3.7) {$G=(V,E)$};

	\begin{scope}[every path/.style={}]
		\draw (A) -- (B);
		\draw (B) -- (C);
		\draw (B) -- (E);
		\draw (C) -- (D);
		\draw (B) -- (D);
		\draw (D) -- (E);
	\end{scope}
	
	\draw node[single arrow,draw=black,pattern=crosshatch dots,minimum height=1.5cm,label=below:$\beta$] at (-0.5,0) {};

	\begin{scope}[xshift=3cm,every node/.style={auto,circle,draw,minimum size=17pt,inner sep=0,outer sep=0}]
    
	\draw (0,0) ellipse (1cm and 3.2cm);
   	\draw (3,0) ellipse (1cm and 3.2cm);
    \draw (6,0) ellipse (1cm and 3.2cm);

\node[black] (l1) at (0,2.25){};
\node[black] (l2) at (0,0.75){};
\node[black] (l3) at (0,-0.75){};
\node[black] (l4) at (0,-2.25){};

\node[black] (m1) at (3,2.5){$pq$};
\node[black] (m5) at (3,1.5){$qr$};
\node[black] (m6) at (3,0.5){$qs$};
\node[black] (m2) at (3,-0.5){$qt$};
\node[black] (m4) at (3,-1.5){$rs$};
\node[black] (m3) at (3,-2.5){$st$};

\node[black] (a) at (6,-2){$p$};
\node[black] (b) at (6,-1){$q$};
\node[black] (c) at (6,0){$r$};
\node[black] (d) at (6,1){$s$};
\node[black] (e) at (6,2){$t$};

\draw (l1) -- (m1);
\draw (l1) -- (m2);
\draw (l1) -- (m3);
\draw (l1) -- (m4);
\draw (l1) -- (m5);
\draw (l1) -- (m6);

\draw (l2) -- (m1);
\draw (l2) -- (m2);
\draw (l2) -- (m3);
\draw (l2) -- (m4);
\draw (l2) -- (m5);
\draw (l2) -- (m6);

\draw (l3) -- (m1);
\draw (l3) -- (m2);
\draw (l3) -- (m3);
\draw (l3) -- (m4);
\draw (l3) -- (m5);
\draw (l3) -- (m6);

\draw (l4) -- (m1);
\draw (l4) -- (m2);
\draw (l4) -- (m3);
\draw (l4) -- (m4);
\draw (l4) -- (m5);
\draw (l4) -- (m6);

\draw (c) -- (m1);
\draw (d) -- (m1);
\draw (e) -- (m1);

\draw (a) -- (m2);
\draw (c) -- (m2);
\draw (d) -- (m2);

\draw (a) -- (m3);
\draw (b) -- (m3);
\draw (c) -- (m3);

\draw (a) -- (m4);
\draw (b) -- (m4);
\draw (e) -- (m4);

\draw (a) -- (m5);
\draw (e) -- (m5);
\draw (d) -- (m5);

\draw (a) -- (m6);
\draw (c) -- (m6);
\draw (e) -- (m6);
        
    \node[draw=none,rectangle] (L) at (0,-3.6) {$L$};
	\node[draw=none,rectangle] (M) at (3,-3.6) {$M$};
	\node[draw=none,rectangle] (N) at (6,-3.6) {$N$};

	\node[draw=none,rectangle] (G') at (3,3.7) {$G'=(V',E')$};

		\end{scope}

		\end{tikzpicture}
 		\caption{The graph $G'$ obtained from $G$ using  the transformation $\beta$ and $k=3$.}\label{fig:extension}
	\end{center}
\end{figure}

We now prove intermediate results, which help concluding that {\MC} is \NP-complete on bipartite graphs.

\begin{lemma}\label[lemma]{lemma: l is well chosen}
	Let $m$, $n$ and $k$ be integers such that $1\leq k <n-1 \leq m$ and $\ell := m \cdot (n-k-1) -k+1$.
	Then $\frac{\ell+k-1}{\ell+m+k-1} = \frac{n-k-1}{n-k}$.
\end{lemma}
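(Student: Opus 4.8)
The plan is purely computational: this is an algebraic identity, so I would verify it by substituting the definition of $\ell$ directly into both the numerator and the denominator and simplifying. The statement carries no hidden structure; it is precisely the numerical fact that makes the choice $|L| = |M|\cdot(|V|-k-1)-k+1$ in \cref{definition: bipartite global reduction} work, so the role of the lemma is to package this computation cleanly for later use.

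First I would handle the numerator. Substituting $\ell = m(n-k-1)-k+1$ gives
\[
\ell + k - 1 = m(n-k-1) - k + 1 + k - 1 = m(n-k-1),
\]
so the additive term $-k+1$ in the definition of $\ell$ is exactly cancelled. Next I would treat the denominator the same way:
\[
\ell + m + k - 1 = m(n-k-1) + m = m\bigl((n-k-1)+1\bigr) = m(n-k).
\]
Thus both quantities share the common factor $m$, with the remaining factors differing only by the single unit that distinguishes $(n-k-1)$ from $(n-k)$.

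Finally I would form the quotient and cancel:
\[
\frac{\ell+k-1}{\ell+m+k-1} = \frac{m(n-k-1)}{m(n-k)} = \frac{n-k-1}{n-k}.
\]
The only point requiring a word of justification is that the cancellation of $m$ is legitimate, i.e.\ that $m \neq 0$; this is immediate from the hypothesis $1 \le k < n-1 \le m$, which forces $m \ge n-1 \ge k+1 \ge 2 > 0$. There is no genuine obstacle here — the entire content is the two one-line simplifications above, and the hypotheses on $k$, $n$, and $m$ serve only to guarantee that the denominators are nonzero so that the displayed ratios are well defined.
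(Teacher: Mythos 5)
Your proof is correct and takes essentially the same route as the paper's: both hinge on the observation that $\ell + k - 1 = m(n-k-1)$ and hence $\ell + m + k - 1 = m(n-k)$. The paper packages this as a cross-multiplication chain, $(n-k)(\ell+k-1) = (n-k-1)(\ell+m+k-1)$, while you cancel the common factor $m$ directly (correctly noting $m>0$); the two computations are interchangeable.
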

\begin{proof}
	$
	(n-k) \cdot (\ell+k-1) = (n-k-1) \cdot (\ell+k-1) +\ell +k-1
	= (n-k-1) \cdot (\ell+k+1) + m \cdot (n-k-1)
	= (n-k-1) \cdot (\ell+m+k+1)\,.
	$
\end{proof}

\begin{lemma}\label[lemma]{lemma: bipartite at least one non-neighbor}
	Let $G=(V,E)$ be a graph not isomorphic to a star, $k$ an integer such that $1 \leq k < |V|-1$ and $G'=(V',E')$ be such that $G' = \beta(G,k)$.
	Let $S \subset V'$ be such that $|S| \geq |L| + |M| + k$.
	Then a vertex $f \in M$ is satisfied in $G'[S]$ if and only if $d_{\overbar{S}}(f) < |\overbar{S}|$.
\end{lemma}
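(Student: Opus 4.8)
The plan is to translate the satisfaction condition \cref{eq:vertex satisfied} into a single inequality in the quantity $d_{\overbar{S}}(f)$ and then exploit the specific value of $|L|$. Write $n := |V| = |N|$, $m := |M|$ and $\ell := |L|$, so that $|V'| = \ell + m + n$. Each $f \in M$ is adjacent to all of $L$ and to every vertex of $N$ except the two endpoints of its corresponding edge, and to no vertex of $M$; hence $d(f) = \ell + n - 2 = |V'| - m - 2$. The hypothesis $|S| \geq |L| + |M| + k$ gives $|\overbar{S}| = |V'| - |S| \leq n - k$, and $\overbar{S} \neq \emptyset$ since $S \subsetneq V'$. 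The one arithmetic fact I will need is that the choice $\ell = m(n-k-1) - k + 1$ forces $|V'| - 1 = (m+1)(n-k)$, which is exactly \cref{lemma: l is well chosen} rewritten.

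For the forward implication I will argue by contraposition. Since $d_{\overbar{S}}(f) \leq |\overbar{S}|$ always holds, the negation of $d_{\overbar{S}}(f) < |\overbar{S}|$ is $d_{\overbar{S}}(f) = |\overbar{S}|$, i.e. every vertex of $\overbar{S}$ is a neighbour of $f$. As the non-neighbours of $f$ are exactly $(M \setminus \{f\}) \cup \{u,v\}$ (where $\{u,v\}$ is the edge corresponding to $f$), this means $\overbar{S}$ meets none of them, so $M \subseteq S$. Because $G$ is connected and not a star, $m \geq 2$, hence $S$ contains some $f' \in M \setminus \{f\}$, which is non-adjacent to $f$; thus $d_S(f) \leq |S| - 2$ and, using the right-hand form of \cref{eq:vertex satisfied}, $\frac{d_S(f)}{|S|-1} \leq \frac{|S|-2}{|S|-1} < 1 = \frac{d_{\overbar{S}}(f)}{|\overbar{S}|}$, so $f$ is not satisfied.

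For the converse I will use $d_S(f) = d(f) - d_{\overbar{S}}(f)$ to rewrite \cref{eq:vertex satisfied} in the form $d_{\overbar{S}}(f)\,(|V'|-1) \leq |\overbar{S}|\,d(f)$. Assuming $d_{\overbar{S}}(f) \leq |\overbar{S}| - 1$, it suffices to verify $(|\overbar{S}|-1)(|V'|-1) \leq |\overbar{S}|\,d(f)$. Substituting $d(f) = |V'| - m - 2$ and simplifying, the difference (right side minus left side) equals $(|V'|-1) - |\overbar{S}|\,(m+1)$, which by the identity $|V'| - 1 = (m+1)(n-k)$ equals $(m+1)\,(n-k-|\overbar{S}|) \geq 0$ because $|\overbar{S}| \leq n-k$. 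Hence $f$ is satisfied.

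The only-if direction is purely structural and straightforward; the real content is in the converse, where the bound $|\overbar{S}| \leq n-k$ meets the threshold exactly. The main obstacle is therefore the choice of $|L|$: it is tuned precisely so that $(m+1)(n-k) = |V'|-1$, which makes the satisfaction inequality hold for every admissible $S$ as soon as $f$ has a single non-neighbour outside $S$. Getting this bookkeeping right, and confirming that the near-equality points in the correct direction, is the crux of the argument.
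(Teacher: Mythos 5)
Your proof is correct and takes essentially the same route as the paper's: both rest on $d(f)=|V'|-|M|-2$, the bound $|\overbar{S}|\leq|N|-k$ from the size hypothesis, and the arithmetic of \cref{lemma: l is well chosen} --- your identity $|V'|-1=(|M|+1)(|N|-k)$ is exactly that lemma cross-multiplied. The differences are cosmetic: you clear denominators into one polynomial inequality where the paper chains fraction inequalities, and you spell out (via the non-neighbours $(M\setminus\{f\})\cup\{u,v\}$ and $|M|\geq 2$) the forward direction that the paper dismisses as obvious.
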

\begin{proof}
	If $d_{\overbar{S}}(f) = |\overbar{S}|$, $f$ is obviously not satisfied.
	If $d_{\overbar{S}}(f) < |\overbar{S}|$, then notice that $d(f) = |L|+|N|-2 = |V'| - |M| - 2$.
    Therefore, $d_S(f) = d(f) - d_{\overbar{S}}(f) \geq |V'| - |M| - 2 - |\overbar{S}| + 1 = |S| - |M| - 1$.
    Also, $|\overbar{S}| \leq |N|-k$.
	Consequently, according to \cref{lemma: l is well chosen},
	\[
		\frac{d_S(f)}{|S|-1} = \frac{|S| -|M|-1}{|S|-1} \geq
        \frac{|L| + k -1}{|L|+|M|+k-1} = \frac{|N|-k-1}{|N|-k} \geq \frac{d_{\overbar{S}}(f)}{|\overbar{S}|}\,.
	\]

\end{proof}

\begin{lemma}\label[lemma]{lemma: bipartite one can find bigger}
	Let $G=(V,E)$ be a graph not isomorphic to a star, $k$ an integer, $1 \leq k < |V|-1$, and let $G'=(V',E')$ be such that $G' = \beta(G,k)$.
	Let $S_1 \subset V'$ such that $G'[S_1]$ is a PDS and $|S_1| \geq |L|+|M|+k$.
	Then, there exists $S_2 \subset V'$ such that $G'[S_2]$ is a PDS, $|S_2| \geq |S_1|$ and $L \cup M \subseteq S_2$.
	Moreover, $S_2$ can be found in polynomial time.
\end{lemma}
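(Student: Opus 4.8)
The plan is to follow the template of \cref{lemma: split one can find bigger}, but the decisive leverage in the bipartite construction comes from the padding set $L$ rather than from a clique on $M$. Instead of building $S_2$ by adding vertices and then transferring some out, I will prove the much stronger structural statement that \emph{every} PDS $S_1$ with $|S_1| \ge |L|+|M|+k$ already satisfies $M \subseteq S_1$ and has $S_1 \cap N$ independent in $G$. Granting this, the required set is obtained for free by taking $S_2 := S_1 \cup L$, and, in contrast with the split case, no vertex ever needs to be transferred out of $S$.

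First I would record two membership facts. Since $|L|$ is large, $|S_1| \ge |L|+|M|+k$ forces $S_1$ to meet $M$ (a set contained in $L \cup N$ is too small). It must also meet $L$: if $L \subseteq \overbar{S_1}$, then any $f \in S_1 \cap M$ is adjacent to all of $L$, so $d_{\overbar{S_1}}(f) \ge |L|$; but $|\overbar{S_1}| \le |N|-k \le |L|$ by the choice of $|L|$, whence $d_{\overbar{S_1}}(f) = |\overbar{S_1}|$ and, by \cref{lemma: bipartite at least one non-neighbor}, $f$ is unsatisfied, contradicting that $G'[S_1]$ is a PDS.

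The heart of the argument, and the step I expect to be the main obstacle, is to show $\overbar{S_1}\cap M = \emptyset$. I pick any $v \in S_1 \cap L$; it is adjacent exactly to $M$, so $d_{\overbar{S_1}}(v) = |\overbar{S_1}\cap M| =: q$ and $d_{S_1}(v) = |M|-q$. Writing \cref{eq:vertex satisfied} for $v$ and clearing denominators yields $q\,(|V'|-1) \le |\overbar{S_1}|\cdot |M|$. A direct computation from $|L| = |M|\cdot(|V|-k-1)-k+1$ (the same bookkeeping that underlies \cref{lemma: l is well chosen}) gives the identity $|V'|-1 = (|V|-k)(|M|+1)$, while $|\overbar{S_1}| \le |N|-k = |V|-k$. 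Substituting these bounds gives $q \le \frac{|M|}{|M|+1} < 1$, hence $q = 0$ and $M \subseteq S_1$. This is exactly where the carefully tuned cardinality of $L$ is used: a single $L$-vertex's constraint is strong enough to pin down the whole of $M$, which is what makes the transfer argument of the split case unnecessary.

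With $M \subseteq S_1$ in hand, \cref{lemma: bipartite at least one non-neighbor} applies to every $f = \{u,v\} \in M$; since $\overbar{S_1}$ contains no vertex of $M$, the only possible non-neighbour of $f$ in $\overbar{S_1}$ is an endpoint, so $f$ is satisfied if and only if $u \in \overbar{S_1}$ or $v \in \overbar{S_1}$. As all of $M$ is satisfied, every edge of $G$ has an endpoint outside $S_1$, i.e.\ $R := S_1 \cap N$ is independent in $G$. I then set $S_2 := S_1 \cup L = L \cup M \cup R$. Each vertex of $L \cup R$ has all its neighbours in $M \subseteq S_2$ and is therefore satisfied, and each vertex of $M$ is satisfied because $R$ is independent, so $G'[S_2]$ is a PDS; since $G$ is connected and not a star, $R \neq N$, giving $\overbar{S_2} = N \setminus R \neq \emptyset$ and $S_2 \neq V'$. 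Finally $S_2 \supseteq S_1$ yields $|S_2| \ge |S_1|$ trivially, and $S_2$ is produced in polynomial time, completing the plan.
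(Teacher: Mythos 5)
Your proposal is correct and follows essentially the same route as the paper's proof: the decisive step in both is to take a vertex of $S_1 \cap L$ and use the calibrated size of $L$ to force $d_{\overbar{S}_1}$ of that vertex to be $0$, hence $M \subseteq S_1$, after which $S_2 = S_1 \cup L$ is shown to be a PDS via \cref{lemma: bipartite at least one non-neighbor}. The only differences are cosmetic — you verify $M$'s satisfaction in $S_2$ by first extracting independence of $R = S_1 \cap N$, where the paper transfers satisfaction directly (a non-neighbor in $\overbar{S}_1$ lies outside $L$, so it survives in $\overbar{S}_2$) — and your cardinality claim that $S_1$ must meet $M$ has the same harmless edge-case imprecision (equality for trees with extreme $k$) as the paper's own claim that $S_1$ must meet $L$.
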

\begin{proof}
 	First, we prove that $M \subset S_1$.
    As $|S_1| \geq |L|+|M|+k > |M|+|N|$, then $S_1 \cap L \neq \emptyset$.
    Take a vertex $z \in S_1 \cap L$ and notice that since $d(z) = |M|$, then $d_{\overbar{S}_1}(z) = |M \setminus S_1|$.
    The vertex $z$ is satisfied in $G'[S_1]$ if and only if
    \[
    	\frac{|M| - d_{\overbar{S}_1}(z)}{|L|+|M|+k-1} \geq \frac{d_{S_1}(z)}{|S_1|-1} \geq \frac{d_{\overbar{S}_1}(z)}{|\overbar{S}_1|} \geq \frac{d_{\overbar{S}_1}(z)}{|N|-k} \,.
    \]
    This implies that
		\begin{flalign*}
					|M| \cdot (|N|-k) - d_{\overbar{S}_1}(z) \cdot (|N|-k) &\geq d_{\overbar{S}_1}(z) \cdot (|L|+|M|+k-1)\\
			\iff	|M| \cdot (|N|-k) - d_{\overbar{S}_1}(z) \cdot (|N|-k) &\geq d_{\overbar{S}_1}(z) \cdot |M| \cdot (|N|-k)\\
			\iff 	|M| \cdot (|N|-k) &\geq  d_{\overbar{S}_1}(z) \cdot (|M|+1) \cdot (|N|-k)\\
			\iff	0 &\geq d_{\overbar{S}_1}(z)\,.
		\end{flalign*}
	Thus, we have $d_{\overbar{S}_1}(z) = 0$ and conclude that $M \subset S_1$.
	
	Let $S_2 := S_1 \cup L \cup M$ and $f \in M$.
    As $f$ is satisfied in $G'[S_1]$, according to \cref{lemma: bipartite at least one non-neighbor}, we have $d_{\overbar{S}_1}(f) < |\overbar{S}_1|$.
    Since $f$ is connected to all the vertices in $L$, necessarily $d_{\overbar{S}_2}(f) < |\overbar{S}_2|$ and $f$ remains satisfied in $G'[S_2]$.
	Obviously, the vertices in $L$ are satisfied in $G'[S_2]$ since all their neighbours are in $M$.
	This is also true for the vertices in $N \cap S_2$.
\end{proof}

Notice that \cref{lemma: bipartite one can find bigger} implies that there exists a PDS of maximum size that is connected in $G'$.
Hence, the following result also holds when looking for a connected PDS.

\begin{theorem}\label[theorem]{theorem: NP hard on bipartite}
	{\MPDSlong} is \NP-hard on bipartite graphs.
\end{theorem}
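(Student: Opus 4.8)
The plan is to mirror the proof of \cref{theorem: max pds hard on split}, again reducing from {\MIS}, but now using the construction $\beta$ in place of $\sigma$. Given a connected graph $G=(V,E)$ not isomorphic to a star and an integer $k$ with $1 \le k < |V|-1$, I would set $G' = \beta(G,k)$ and claim that $G$ has an independent set of size at least $k$ if and only if $G'$ has a PDS of size at least $|L|+|M|+k$. As $G'$ is bipartite and constructible in polynomial time, and {\MIS} is \NP-hard (we may assume $G$ connected, not a star, and $1 \le k < |V|-1$, since the remaining instances are trivial), this yields the theorem.

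For the forward direction I would take an independent set $R$ of size at least $k$ and set $S := L \cup M \cup R$, so that $|S| \ge |L|+|M|+k$ and \cref{lemma: bipartite at least one non-neighbor} applies; note also $S \neq V'$ because $R \neq V$ (the connected non-trivial graph $G$ has no independent set equal to $V$). Every vertex of $L$ and every vertex of $R \subseteq N$ has all its neighbours in $M \subseteq S$, hence is trivially satisfied. For a vertex $e=\{u,v\} \in M$, its only non-neighbours in $N$ are $u$ and $v$; since $R$ is independent, at least one endpoint lies outside $R$, i.e.\ in $\overbar{S} = N \setminus R$, so $d_{\overbar{S}}(e) < |\overbar{S}|$ and $e$ is satisfied by \cref{lemma: bipartite at least one non-neighbor}. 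Thus $G'[S]$ is a PDS of the desired size.

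For the backward direction I would start from any PDS $G'[S]$ with $|S| \ge |L|+|M|+k$ and apply \cref{lemma: bipartite one can find bigger} to obtain a PDS $G'[S']$ with $|S'| \ge |S|$ and $L \cup M \subseteq S'$. Setting $R' := S' \cap N$, the size bound gives $|R'| = |S'| - |L| - |M| \ge k$. To see $R'$ is independent, suppose an edge $e = \{u,v\}$ of $G$ had both endpoints in $R'$; then $e \in M \subseteq S'$ would have both of its $N$-non-neighbours $u,v$ inside $S'$, so every vertex of $\overbar{S'} \subseteq N$ is adjacent to $e$, giving $d_{\overbar{S'}}(e) = |\overbar{S'}|$. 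By \cref{lemma: bipartite at least one non-neighbor} this makes $e$ unsatisfied, contradicting that $G'[S']$ is a PDS. Hence $R'$ is an independent set of size at least $k$.

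The routine part is the size bookkeeping ensuring $|S|$ and $|S'|$ reach the threshold $|L|+|M|+k$ required by the lemmas, together with the checks that $S, S' \neq V'$. The one point I would handle most carefully --- and the crux of the argument --- is the adjacency pattern of $M$: each $e \in M$ is joined to all of $L$ and to every vertex of $N$ except its two endpoints, so whether $e$ is satisfied depends solely on whether some endpoint of $e$ remains in $\overbar{S}$. This is precisely where the independence of $R$ (forward) and of $R'$ (backward) enters, and where \cref{lemma: bipartite at least one non-neighbor} translates this combinatorial condition into the proportional-density inequality. Since the two preceding lemmas already carry the analytic weight, I anticipate no obstacle beyond disciplined case-tracking.
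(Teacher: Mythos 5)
Your proposal is correct and takes essentially the same route as the paper: the same reduction $G' = \beta(G,k)$, the same equivalence between independent sets of size $k$ and PDS's of size $|L|+|M|+k$, and the same use of \cref{lemma: bipartite at least one non-neighbor} (forward direction and independence of $R'$) and \cref{lemma: bipartite one can find bigger} (backward direction). The only cosmetic differences are that you prove the independence of $R'$ by contradiction where the paper counts non-neighbors directly, and that you make the checks $S, S' \neq V'$ explicit.
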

\begin{proof}
	Let $G=(V,E)$ be a graph not isomorphic to a star, $k \in \{1,\dots,|V|-1\}$.
	Notice that since $G$ is connected and not isomorphic to a star, then there is no independent set of size $|V|-1$ in $G$.
	Let $G' = (V',E')$ such that $ G' = \beta(G,k)$.
	We claim that there is an independent set of size at least $k$ in $G$ if and only if there is a PDS of size at least $|L|+|M|+k$ in $G'$.

	Let $R$ be an independent set of $G$ of size at least $k$.
	In $G'$, we define $S := L \cup M \cup R$ and $\overbar{S} := V '\setminus S$.
	First, note that $R \subseteq N$ thus $\overbar{S} = N \setminus R$.
	The vertices in $L \cup R$ are obviously satisfied in $G'[S]$ as all their neighbors are in $S$.
	Hence, if there exists vertices not satisfied in $G'[S]$, then they must belong to the set $M$.
	Consider a vertex $e \in M$.
	Since $R$ is an independent set of $G$, then for each edge $e=\{u,v\} \in E$ at most one of the vertices $u$ and $v$ belongs to $R$, and, therefore, at least one belong to $\overbar{S}$.
	Therefore, the vertex $e \in M$ is not adjacent to at least one vertex in $\overbar{S}$, and thus $d_{\overbar{S}}(f) < |\overbar{S}|$.
	According to \cref{lemma: bipartite at least one non-neighbor}, $e$ is satisfied in $G[S]$.
	Thus, $G[S]$ is a PDS of size at least $|L|+|M|+k$.

	Let $S\subset V'$ be of size at least $|L|+|M|+k$ such that $G'[S]$ is a PDS.
	According to \cref{lemma: bipartite one can find bigger}, there exists $S' \subset V'$ such that $G'[S']$ is a PDS, $|S'| \geq |S|$ and $L \cup M \subseteq S'$.
	We claim that $R' := S' \cap N$ is an independent set of $G$ of size at least $k$.
	Obviously $|R'| \geq k$.
	\Cref{lemma: bipartite at least one non-neighbor} states that for all satisfied vertices $e \in M$, $d_{\overbar{S}'}(e) < |\overbar{S}'|$.
	Therefore, as $d_N(e) = |N|-2$ and $\overbar{S}' \subseteq N$, there is at most one vertex $u \in S' \cap N$ not adjacent to $e$.
  From the construction $\sigma$, if there is no edge between the vertices $e \in M$ and $u \in N$ in $G'$, then $u \in e$ in $G$.
  Hence, the edge $e \in E$ in $G$ has at most one vertex $u \in R'$.
	Thus, $R'$ is an independent set of size at least $k$.
\end{proof}

Below, we prove that deciding if a subset of vertices can be extended into a larger subset which induces a PDS is \NP-complete.
We obtain as a corollary that deciding if a PDS is inclusion-wise maximal is \textsf{co-NP}-complete.

\noindent\parbox{\linewidth}{
\medskip
\noindent	\textsc{PDS Extension}\\
\noindent	\textbf{Input:} A graph $G=(V, E)$, $U \subset V$.\\
\noindent	\textbf{Question:} Is there a vertex subset
$S \subset V$ such that $U \subset S$ and $G[S]$ is a proportionally dense subgraph?
\medskip
}

To prove that \textsc{PDS Extension} is \NP-complete, we use again the construction $\beta$ as defined in \cref{definition: bipartite global reduction}.

\begin{lemma}\label[lemma]{lemma: L and M means max PDS}
Let $G=(V,E)$ be a graph not isomorphic to a star, $k$ an integer, $1 \leq k < |V|-1$, and $G'=(V',E')$ be such that $G' = \beta(G,k)$.
Let $S \subset V'$ be such that $L \cup M \subset S$ and $G'[S]$ is a PDS.

Then $|S| \geq |L| + |M| + k$.
\end{lemma}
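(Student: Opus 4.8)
The plan is to convert the claimed size bound into a statement about how many vertices of $N$ the set $S$ is forced to contain. Since $L \cup M \subseteq S \subseteq V'$ and $V' = L \cup M \cup N$, I would set $R := S \setminus (L \cup M) = S \cap N$ and $r := |R|$, so that $|S| = |L| + |M| + r$ and the claim is equivalent to $r \geq k$. Because the containment $L \cup M \subset S$ is proper, $R$ is nonempty, hence $r \geq 1$; this is exactly where the hypothesis is used, since without it $S = L \cup M$ need not reach into $N$ at all. I would then argue by contradiction, assuming $1 \leq r < k$ and exhibiting a vertex of $M$ that violates \cref{eq:vertex satisfied}.

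The argument splits on whether $R$ is independent in $G$. First, if some edge $f = \{u,v\} \in E$ has both endpoints in $R$, then the only two non-neighbours of $f$ in $N$ already lie in $S$, so $f$ is adjacent to all of $\overbar{S} = N \setminus R$; thus $d_{\overbar{S}}(f) = |\overbar{S}|$, whereas $d_S(f) = |L| + d_R(f) \leq |L| + r - 2 < |S| - 1$, so $f$ is unsatisfied. Hence $R$ must be independent. In that case I pick any $u \in R$ (possible as $r \geq 1$); connectedness of $G$ gives $u$ a neighbour $v$, which lies outside $R$ by independence, so the edge $f = \{u,v\} \in M$ has exactly one endpoint in $R$. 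A direct count then yields $d_S(f) = |L| + r - 1$ and $d_{\overbar{S}}(f) = |N| - r - 1$, together with $|S| - 1 = |L| + |M| + r - 1$ and $|\overbar{S}| = |N| - r$.

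It remains to show that this $f$ is unsatisfied whenever $r < k$, i.e.\ that $\frac{|L| + r - 1}{|L| + |M| + r - 1} < \frac{|N| - r - 1}{|N| - r}$. I would view both sides as functions of $r$: the left-hand side equals $1 - \frac{|M|}{|L| + |M| + r - 1}$ and is strictly increasing, while the right-hand side equals $1 - \frac{1}{|N| - r}$ and is strictly decreasing. \Cref{lemma: l is well chosen}, applied with $m = |M|$, $n = |N|$ and $\ell = |L|$, states precisely that the two sides coincide at $r = k$, so the two monotonicities force the strict inequality in the required direction for every $r < k$, giving the contradiction and hence $r \geq k$. The main obstacle — and the reason $|L| = |M|(|N|-k-1)-k+1$ is chosen exactly so — is this calibration: the size of $L$ is tuned so that the edges of $M$ incident to a single chosen $N$-vertex flip from unsatisfied to satisfied precisely at $r = k$, which is what \cref{lemma: l is well chosen} encodes. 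Once that identity and the two monotonicity observations are in place (and one checks the denominators stay positive, which follows from $r < k < |N| - 1$), the remainder is routine arithmetic.
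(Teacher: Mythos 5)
Your proof is correct and follows essentially the same route as the paper's: argue by contradiction assuming $r < k$ (the paper calls it $k'$), exhibit a vertex $f \in M$ non-adjacent to some vertex of $S \cap N$, and combine the calibration identity of \cref{lemma: l is well chosen} with the two monotonicity observations in $r$ to show that $f$ violates \cref{eq:vertex satisfied}. The only difference is that your case split on whether $S \cap N$ is independent in $G$ is unnecessary: the paper replaces your exact degree counts by the one-sided bounds $d_S(f) \leq |S|-|M|-1$ and $d_{\overbar{S}}(f) \geq |\overbar{S}|-1$, which hold no matter where the second endpoint of $f$ lies, and these already yield the same contradiction.
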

\begin{proof}
Let $u \in S \cap N$, and notice that $d_S(u) < |M|$, so there exists a vertex in $M$ which is not connected to $u$.
Let $f \in M$ be such a vertex.
Note that $d_S(f) \leq |S| - |M| -1$ and $d_{\overbar{S}}(f) \geq |\overbar{S}|-1$, as $f$ is not connected to $u$.

Let $k' := |N \setminus \overbar{S}| = |N| - |\overbar{S}|$.
We claim that $k' \geq k$.
Suppose by contradiction that $k' < k$.
Then $\frac{|L|+k'-1}{|L|+|M|+k'-1} < \frac{|L|+k-1}{|L|+|M|+k-1}$ and $\frac{|N|-k-1}{|N|-k} < \frac{|N|-k'-1}{|N|-k'}$.
According to \cref{lemma: l is well chosen}, we conclude that $\frac{|L|+k'-1}{|L|+|M|+k'-1} < \frac{|N|-k'-1}{|N|-k'}$.
Therefore,
\[
\frac{d_S(f)}{|S|-1} \leq \frac{|L|+k'-1}{|L|+|M|+k'-1} < \frac{|N|-k'-1}{|N|-k'} \leq \frac{d_{\overbar{S}}(f)}{|\overbar{S}|}\,,
\]
which contradicts that $f$ is satisfied, and thus that $G'[S]$ is a PDS.
We conclude that $|S| = |L| + |M| + k' \geq |L|+|M|+k$.
\end{proof}

\begin{theorem}\label[theorem]{theorem: PDS extension}
\textsc{PDS Extension} is \NP-complete on bipartite graphs.
\end{theorem}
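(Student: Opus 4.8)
The plan is to establish membership in \NP\ and then \NP-hardness by a reduction from the decision version of {\MIS}, recycling the construction $\beta$ and the lemmas already proved. Membership is immediate: a candidate set $S$ serves as a certificate, and one verifies in polynomial time that $U \subset S$, that $2 \le |S| < |V|$, and that every vertex of $S$ respects \cref{eq:vertex satisfied}; this certifies that $G[S]$ is a PDS containing $U$.

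For hardness I would reduce from the problem of deciding, given a connected graph $G$ not isomorphic to a star and an integer $k$ with $1 \le k \le |V|-2$, whether $G$ has an independent set of size at least $k$ (\NP-hard~\cite{Karp1972}; confining $k$ to this range is harmless, since larger values give trivial no-instances). Given $(G,k)$, I would build $G' = \beta(G,k)$ and set $U := L \cup M$, outputting the \textsc{PDS Extension} instance $(G',U)$. As $G'$ is bipartite, this simultaneously yields the hardness on bipartite graphs, and the construction is clearly polynomial.

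Correctness reduces to linking ``$U$ extends to a PDS'' with ``$G$ has a large independent set,'' and both directions re-use earlier machinery. For the forward direction, an independent set $R$ with $|R|\ge k$ gives $S := L\cup M\cup R$, exactly the set shown to be a PDS in the proof of \cref{theorem: NP hard on bipartite}: each $e\in M$ keeps at least one endpoint in $\overbar{S}=N\setminus R$, so $d_{\overbar{S}}(e)<|\overbar{S}|$, and $|S|\ge|L|+|M|+k$ lets \cref{lemma: bipartite at least one non-neighbor} apply, while the vertices of $L$ and of $S\cap N$ are trivially satisfied; since $R\neq\emptyset$ and $|R|\le|V|-2$ we obtain the strict inclusions $U\subsetneq S\subsetneq V'$. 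For the backward direction, if $L\cup M=U\subset S$ with $G'[S]$ a PDS, the decisive step is \cref{lemma: L and M means max PDS}, which forces $|S|\ge|L|+|M|+k$, hence $|S\cap N|\ge k$; then \cref{lemma: bipartite at least one non-neighbor} gives $d_{\overbar{S}}(e)<|\overbar{S}|$ for every $e\in M$, so each edge has an endpoint in $\overbar{S}\subseteq N$, whence $R':=S\cap N$ is an independent set of size at least $k$.

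The only ingredient genuinely new to this theorem is the guarantee that forcing $L\cup M$ into the subgraph cannot be done ``for free'': the PDS is compelled to absorb at least $k$ vertices of $N$, for otherwise some vertex of $M$ would be starved of neighbours and become unsatisfied. This is precisely \cref{lemma: L and M means max PDS}, so I expect the principal obstacle to have already been cleared by that lemma; the remaining work is the bookkeeping needed to align the present reduction with that of \cref{theorem: NP hard on bipartite}, together with attention to the strict inclusions $U\subsetneq S\subsetneq V'$ demanded by the definition of a PDS, which is exactly why $k$ is kept in the range $1\le k\le|V|-2$.
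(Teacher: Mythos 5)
Your proposal is correct and follows essentially the same route as the paper: both reduce from {\MIS} via the construction $\beta$ with $U := L \cup M$, reuse the forward direction of \cref{theorem: NP hard on bipartite} to turn an independent set into a PDS containing $U$, and invoke \cref{lemma: L and M means max PDS} together with \cref{lemma: bipartite at least one non-neighbor} for the converse. The additional bookkeeping you supply (the range $1 \leq k \leq |V|-2$ and the strict inclusions $U \subsetneq S \subsetneq V'$) only makes explicit what the paper leaves implicit.
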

\begin{proof}
Obviously, \textsc{PDS Extension} is in \NP.
Let $G=(V,E)$ be a graph not isomorphic to a star, $k \in \{1,\dots,|V|-1\}$.
Notice that since $G$ is connected and not isomorphic to a star, then there is no independent set of size $|V|-1$ in $G$. %
Let $G' = (V',E')$ such that $ G' = \beta(G,k)$.
We claim that there is an independent set of size at least $k$ in $G$ if and only if there is PDS of size of size at least $|L| + |M| + k$ in $G'$.

Assume there exists an independent set of size $k$ in $G$.
Then, there exists $S \subset V'$ of size $|S| \geq |L|+|M|+k$ such that $G'[S]$ is a PDS, and $L \cup M \subset S$ (see proof of \cref{theorem: NP hard on bipartite}).

According to \cref{lemma: L and M means max PDS}, if there exists $S \subset V'$ such that $G'[S]$ is a PDS and $L \cup M \subset S$, then $|S| \geq |L|+|M|+k$.
Therefore, there exists an independent set of size at least $k$ in $G$ (see proof of \cref{theorem: NP hard on bipartite}).

We conclude that deciding if there exists $S \subset V'$ such that $L \cup M \subset S$ and $G'[S]$ is a PDS is \NP-complete, and thus that \textsc{PDS Extension} is \NP-complete on bipartite graphs.
\end{proof}

Notice that the set $L \cup M$ is connected, and thus if it can be extended into a PDS, then the PDS is connected.
Hence, it is \NP-complete to decide whether a vertex subset (inducing a connected subgraph) can be extended into a connected PDS.
Furthermore, the set $L \cup M$ can induce a PDS or not, depending on the values of $k$ and $|V|$.
Indeed, $G'[L \cup M]$ is a PDS if and only if $\frac{|L|}{|L|+|M|-1} \geq \frac{|N|-2}{|N|}$, which implies $k \leq \frac{n}{2}$.
Therefore, we conclude that deciding if a PDS is inclusion-wise maximal is \textsf{co-NP}-complete.

\begin{corollary}
Let $G=(V,E)$ be a graph and $S \subset V$ such that $G[S]$ a proportionally dense subgraph.
Deciding if $S$ is inclusion-wise maximal is \textsf{co-NP}-complete on bipartite graphs.
\end{corollary}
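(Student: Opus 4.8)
The plan is to establish both membership in \textsf{co-NP} and \textsf{co-NP}-hardness, reusing the machinery built for \textsc{PDS Extension}. Membership is immediate: given a PDS $G[S]$, any proper superset $S' \supsetneq S$ with $G[S']$ a PDS is a polynomial-size certificate that $S$ is \emph{not} inclusion-wise maximal, and the condition \cref{eq:vertex satisfied} can be verified in polynomial time. Hence the non-maximality problem lies in \NP\ and the maximality problem lies in \textsf{co-NP}.

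For the hardness I would reduce from \MIS\ on cubic graphs, which is \NP-hard. The decisive point, implicit in the paragraph preceding the corollary, is that the candidate set whose maximality we test must genuinely be a PDS. So, starting from a connected cubic graph $G$ (which is never a star) and an integer $k$, I would form $G' = \beta(G,k)$ and take $S_0 := L \cup M$ as the set to be tested. First I would check that $G'[S_0]$ is a PDS: the vertices of $L$ and of $N \cap S_0$ are trivially satisfied since all their neighbors lie in $S_0$, while each $f \in M$ satisfies \cref{eq:vertex satisfied} exactly when $\frac{|L|}{|L|+|M|-1} \geq \frac{|N|-2}{|N|}$, a condition that holds for the relevant range $k \leq \frac{n}{2}$.

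The crux is then to prove that $S_0$ is \emph{not} inclusion-wise maximal if and only if $G$ has an independent set of size at least $k$. The forward direction reuses the construction in the proof of \cref{theorem: NP hard on bipartite}: an independent set $R$ with $|R| \geq k$ yields the PDS $S_0 \cup R$, a proper superset of $S_0$. For the converse, any PDS $S' \supsetneq S_0$ contains $L \cup M$, so \cref{lemma: L and M means max PDS} forces $|S'| \geq |L|+|M|+k$; the surplus vertices lie in $N$ and form an independent set of size at least $k$ in $G$, by the same non-adjacency argument used in \cref{theorem: NP hard on bipartite}. Consequently $S_0$ is inclusion-wise maximal precisely when $G$ has no independent set of size $k$, the complement of an \NP-complete problem, giving \textsf{co-NP}-hardness on bipartite graphs.

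The main obstacle I anticipate is the tension between two requirements: the source independent-set instance must remain \NP-hard, while the forced threshold $k \leq \frac{n}{2}$ (needed to make $S_0$ a PDS in the first place) restricts the admissible values of $k$. I would resolve this by working with cubic graphs, where every vertex cover has size at least $\frac{n}{2}$ and hence $\alpha(G) \leq \frac{n}{2}$; thus the only nontrivially answerable thresholds already satisfy $k \leq \frac{n}{2}$, and \MIS\ stays \NP-hard in this regime. This simultaneously guarantees that $S_0$ is a valid PDS and that the reduction is from a genuinely hard problem, completing the \textsf{co-NP}-completeness proof.
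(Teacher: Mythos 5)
Your proposal is correct and follows essentially the same route as the paper: both test maximality of the set $L \cup M$ in $G' = \beta(G,k)$, observe that $G'[L\cup M]$ is a PDS exactly when $k \leq \frac{n}{2}$, and invoke the \textsc{PDS Extension} machinery (\cref{lemma: L and M means max PDS} and the arguments of \cref{theorem: NP hard on bipartite}) to get the equivalence with \MIS. The only difference is that you make explicit a point the paper leaves implicit---that \MIS\ remains \NP-hard when restricted to thresholds $k \leq \frac{n}{2}$---and your restriction to cubic graphs (where $\alpha(G) \leq \frac{n}{2}$) is a valid way to settle it.
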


\section{Approximation}\label{section:approx}

In this section we show that there exists a polynomial-time 2-approximation algorithm for {\MClong}, which establishes the \APX-completeness of the problem.
When the maximum degree $\Delta$ of the graph is bounded, the approximation ratio can be further improved to $(2-\frac{2}{\Delta +1})$ using a better upper bound on the size of a PDS.

\begin{lemma}\label[lemma]{lemma:not PDS means din leq dout}
Let $G=(V,E)$ be a graph  and  $S \subset V$ such that $G[S]$ is not a proportional dense subgraph.
If $|S|=\lceil\frac {|V|}2\rceil$, then there exists $u\in S$ such that $d_S(u) < d_{\overbar{S}}(u)$.
Moreover, if $|V|$ is even and $|S|=\frac{|V|}{2}+1$, then there exists $u\in S$ such that $d_S(u) \leq d_{\overbar{S}}(u)$.
\end{lemma}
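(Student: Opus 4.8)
The plan is to start from the single unsatisfied vertex. Since $G[S]$ is not a PDS, there is some $u \in S$ violating the satisfaction condition of \cref{def:PDS}; using the equivalent form in \cref{eq:vertex satisfied} and cross-multiplying by the positive quantities $|S|-1$ and $|\overbar{S}|$ (both are at least $1$ because $2 \leq |S| < |V|$), this violation reads
\[
d_S(u) \cdot |\overbar{S}| < d_{\overbar{S}}(u) \cdot (|S|-1)\,.
\]
Everything then reduces to comparing the factors $|\overbar{S}|$ and $|S|-1$ for the prescribed sizes, so the proof splits according to the parity of $|V|$.

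For the first claim, suppose first that $|V|$ is odd, so $|S| = \lceil |V|/2 \rceil = \frac{|V|+1}{2}$ and hence $|S|-1 = \frac{|V|-1}{2} = |\overbar{S}|$. The two factors coincide, and cancelling the common positive value leaves exactly $d_S(u) < d_{\overbar{S}}(u)$, as desired. If instead $|V|$ is even, then $|S| = |V|/2 = |\overbar{S}|$, so $|S|-1 = |\overbar{S}|-1$; substituting gives $d_S(u)\cdot|\overbar{S}| < d_{\overbar{S}}(u)\cdot(|\overbar{S}|-1) < d_{\overbar{S}}(u)\cdot|\overbar{S}|$ (note $d_{\overbar{S}}(u) > 0$, otherwise the left-hand side would already be negative), and cancelling $|\overbar{S}|$ again yields $d_S(u) < d_{\overbar{S}}(u)$.

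For the second claim I would write $|V| = 2m$, so $|S| = m+1$ gives $|S|-1 = m$ and $|\overbar{S}| = m-1$, turning the starting inequality into $d_S(u)\cdot(m-1) < d_{\overbar{S}}(u)\cdot m$. I would argue by contradiction: if $d_S(u) > d_{\overbar{S}}(u)$, that is $d_S(u) - d_{\overbar{S}}(u) \geq 1$, then rearranging the inequality as $m\,(d_S(u) - d_{\overbar{S}}(u)) < d_S(u)$ forces $m < d_S(u)$. This contradicts the trivial bound $d_S(u) \leq |S|-1 = m$, so we must have $d_S(u) \leq d_{\overbar{S}}(u)$.

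The argument is elementary and presents no genuine obstacle; the only point requiring care is that the two claims behave differently because the relevant factors $|S|-1$ and $|\overbar{S}|$ are \emph{equal} in the $\lceil |V|/2 \rceil$ case but differ by one in the $|V|/2 + 1$ case, which is precisely why the strict inequality weakens to $\leq$ in the latter. The use of the crude bound $d_S(u) \leq |S|-1$ to close the even-plus-one case is the one step that is easy to overlook.
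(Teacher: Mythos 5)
Your proof is correct and takes essentially the same route as the paper's: both extract the unsatisfied vertex $u$, cross-multiply \cref{eq:vertex satisfied} into $|\overbar{S}| \cdot d_S(u) < (|S|-1) \cdot d_{\overbar{S}}(u)$, and substitute the prescribed sizes. The only cosmetic differences are that the paper handles both parities of $|V|$ in the first claim at once via $\lceil |V|/2 \rceil - 1 \leq \lfloor |V|/2 \rfloor$, and in the $|S| = \frac{|V|}{2}+1$ case it closes the contradiction with the bound $d(u) \leq |V|-1$ rather than your equivalent use of $d_S(u) \leq |S|-1$.
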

\begin{proof}
	Let $S \subset V$ be a subset such that $G[S]$ is not a PDS.
	Then, there exists a vertex $u \in S$ such that \cref{eq:vertex satisfied} is not satisfied in $G[S]$, and therefore $|\overbar{S}| \cdot d_S(u) < (|S|-1) \cdot d_{\overbar{S}}(u)~(*)$.
 	\begin{compactitem}
		\item If $|S| = \lceil \frac{|V|}{2} \rceil$, the inequality $(*)$ implies $\lfloor \frac{|V|}{2} \rfloor \cdot d_S(u) < (\lceil \frac{|V|}{2} \rceil - 1) \cdot d_{\overbar{S}}(u) \leq \lfloor \frac{|V|}{2} \rfloor \cdot d_{\overbar{S}}(u)$, and hence $d_S(u) < d_{\overbar{S}}(u)$.
		
        \item If $|S| = \frac{|V|}{2} + 1$ ($|V|$ even), assume by contradiction that for each vertex  $v\in S$ it holds $d_S(v) > d_{\overbar{S}}(v)$.
		In particular, the inequality $(*)$ implies  $(\frac{|V|}{2}-1) \cdot (d_{\overbar{S}}(u)+1) < \frac{|V|}{2} \cdot d_{\overbar{S}}(u)$, 
		which is true if and only if $d_{\overbar{S}}(u) \geq \frac{|V|}{2}$.
		Thus, $d(u) = d_S(u) + d_{\overbar{S}}(u) > |V|-1$, a contradiction.
	\end{compactitem}
\end{proof}

\begin{theorem}\label{theorem:algo2approx}
    For any graph $G=(V,E)$, a proportionally dense subgraph of size $\lceil \frac{|V|}{2} \rceil$ or $\lceil \frac{|V|}{2} \rceil +1$ can be constructed in $\O(|V| \cdot |E|)$ time.
\end{theorem}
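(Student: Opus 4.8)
The plan is to run a local-search procedure that keeps a candidate set of size $\lceil\frac{|V|}{2}\rceil$ and uses the size of the cut, $\cut(S):=|\{\,\{x,y\}\in E : x\in S,\ y\in\overbar{S}\,\}|$, as a monovariant. Since $\cut(S)=\sum_{x\in S}d_{\overbar{S}}(x)$ is a non-negative integer bounded by $|E|$, any process that strictly decreases it at every step halts after at most $|E|$ steps. The contrapositive of \cref{lemma:not PDS means din leq dout} is the engine: if $|S|=\lceil\frac{|V|}{2}\rceil$ and $S$ is \emph{not} a PDS, then some $u\in S$ has $d_S(u)<d_{\overbar{S}}(u)$, i.e.\ strictly more neighbours outside $S$ than inside. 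Moving such a $u$ across the partition changes the cut by $d_S(u)-d_{\overbar{S}}(u)\le-1$, so it strictly decreases it. I would initialise $S$ to an arbitrary set of size $\lceil\frac{|V|}{2}\rceil$ and repeatedly apply such moves.

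First I would treat odd $|V|$, where this already suffices. Here $\lceil\frac{|V|}{2}\rceil=\lfloor\frac{|V|}{2}\rfloor+1$, so moving the unsatisfied $u$ from $S$ to $\overbar{S}$ turns parts of sizes $(\lceil\frac{|V|}{2}\rceil,\lfloor\frac{|V|}{2}\rfloor)$ into parts of sizes $(\lfloor\frac{|V|}{2}\rfloor,\lceil\frac{|V|}{2}\rceil)$; relabelling the new larger side as $S$ restores the invariant $|S|=\lceil\frac{|V|}{2}\rceil$ while the cut has strictly decreased. Iterating, the process stops only when the current $S$ is a PDS, yielding one of size exactly $\lceil\frac{|V|}{2}\rceil$.

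The hard part is the even case $|V|=2m$, because moving a single vertex out of a set $S$ of size $m$ produces parts of sizes $(m-1,m+1)$, neither equal to $\lceil\frac{|V|}{2}\rceil=m$, so the single-move argument breaks. My plan is to repair the balance with a second move, leaning on the \emph{second} statement of \cref{lemma:not PDS means din leq dout}. After moving the unsatisfied $u$ out of $S$, set $T:=\overbar{S}\cup\{u\}$, of size $m+1=\frac{|V|}{2}+1$. If $T$ is a PDS I output it, obtaining size $\lceil\frac{|V|}{2}\rceil+1$. Otherwise the second part of the lemma provides $w\in T$ with $d_T(w)\le d_{\overbar{T}}(w)$; moving $w$ into $\overbar{T}=S\setminus\{u\}$ changes the cut by $d_T(w)-d_{\overbar{T}}(w)\le0$ and restores both sides to size $m$. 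One checks $w\ne u$: the unsatisfied $u$ has $d_T(u)=d_{\overbar{S}}(u)>d_S(u)=d_{\overbar{T}}(u)$, so $u$ itself violates the defining inequality of $w$. Hence the two moves form a genuine swap whose net effect on the cut is $\le-1$, and each iteration either outputs a PDS of size $m$ or $m+1$ or strictly decreases the cut while preserving $|S|=m$.

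Finally I would bound the running time. The cut starts at most $|E|$ and strictly decreases at each non-terminal iteration, so there are $\O(|E|)$ iterations. Maintaining $d_S(\cdot)$ and $d_{\overbar{S}}(\cdot)$ explicitly, each iteration needs only $\O(|V|)$ work: to test whether the current set is a PDS, to locate an unsatisfied $u$ (or $w$), and to update the degrees of the $\O(|V|)$ neighbours of the moved vertices. This yields the claimed $\O(|V|\cdot|E|)$ bound. The main obstacle to watch is exactly this even case: the inequality supplied by the second part of the lemma is only non-strict, so the balance-restoring move may be cut-neutral, and I must argue (via $w\ne u$ and the $\le-1$ gain of the first move) that the net cut change is strictly negative, so that the monovariant keeps making progress and the procedure terminates.
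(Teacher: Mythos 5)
Your proposal is correct and is essentially the paper's own proof: the paper's \cref{algo:PDS at least n/2} runs the same local search on a set of size $\lceil |V|/2\rceil$, rests on the same \cref{lemma:not PDS means din leq dout}, and uses the same cut-size monovariant, with the update $S := \overbar{S}\cup\{u\}$ making the set sizes alternate between $\frac{|V|}{2}$ and $\frac{|V|}{2}+1$ in the even case --- exactly your strict move followed by a non-strict balancing move. The only cosmetic difference is that you bundle the two moves into one iteration and explicitly prove $w\neq u$ to get a net decrease of at least one, whereas the paper counts them as separate loop iterations and observes that the cut strictly decreases every two loops; both yield the same $\O(|V|\cdot|E|)$ bound.
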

\begin{proof}
First, we show that \cref{algo:PDS at least n/2} terminates and returns a PDS of size $\lceil \frac{|V|}{2} \rceil$ or $\lceil \frac{|V|}{2} \rceil +1$.

\begin{algorithm}[ht]
	\caption{Find a proportional dense subgraph of size $\ceil{ \frac{|V|}{2} }$ or $\ceil{ \frac{|V|}{2} }+1$.}\label{algo:PDS at least n/2}
	\KwIn{$G=(V,E)$ a graph.}
	\KwOut{$S \subset V$  such that $G[S]$ is a PDS.}
	Let $S \subset V$ with $|S| = \lceil \frac{|V|}{2} \rceil$\;\label{algoline:define S for the first time}
	
	\While{$G[S]$ is not a PDS}{\label{algoline:check if PDS}

		Let $u \in S$ such that $d_{\overbar{S}}(u) - d_S(u)$ is maximum\;\label{algoline:chose u with max diff}
		$S := \overbar{S} \cup \{u\}$\;\label{algoline:update S by transferring u}
	}
	\Return $S$\;
\end{algorithm}

	\begin{compactitem}
	\item \textbf{Case 1: $|V|$ is odd.}
	Notice that at the end of each loop, the set $S$ is modified without changing its size $|S| = \frac{|V|+1}{2}=\lceil \frac{|V|}{2} \rceil$.
	If $G[S]$ is not a PDS, then according to \cref{lemma:not PDS means din leq dout} there exists an unsatisfied vertex $v\in S$ for which $d_S(v) < d_{\overbar{S}}(v)$.
	Therefore, the vertex $u$ chosen within the loop has the property $d_{\overbar{S}}(u) - d_S(u) > 0$.
	Thus, the size of the cut between $S$ and $\overbar{S}$ decreases after each loop and the algorithm terminates.
	\item \textbf{Case 2: $|V|$ is even.}
	Notice that \cref{algo:PDS at least n/2} starts with $|S| = \frac{|V|}{2}$.
	If $G[S]$ is not a PDS, then due to \cref{lemma:not PDS means din leq dout}, there exists a vertex $v\in S$ such that $d_S(v) < d_{\overbar{S}}(v)$.
	The selection of the vertex $u\in S$ inside the loop ensures that the size of the cut between $S$ and $\overbar{S}$ strictly decreases at the end of the loop.
	Now, observe that after the first loop, $|S| = \frac{|V|}{2} + 1$.
	If $G[S]$ is not a PDS, according to \cref{lemma:not PDS means din leq dout}, there exists a vertex $v\in S$ such that $d_S(v) \leq d_{\overbar{S}}(v)$.
	Therefore, the vertex $u$ inside the loop has $d_S(u) \leq d_{\overbar{S}}(u)$.
	Obviously, after the second loop, $|S| = \frac{|V|}{2}$.
	Since after each loop $|S|$ alternates between $\frac{|V|}{2}$ and $\frac{|V|}{2}+1$, the cut between $S$ and $\overbar{S}$ strictly decreases every two loops, and the algorithm terminates.
	\end{compactitem}
	
	It is easy to see that the while-loop is called at most $\O(|E|)$ times.
    Now, we prove how one can obtain a $\O(|V| \cdot |E|)$ running time by computing \cref{algoline:check if PDS,algoline:chose u with max diff,algoline:update S by transferring u} in $\O(|V|)$ time.
    
    \paragraph{Preprocessing}
    Once $S$ has been defined at \cref{algoline:define S for the first time}, compute and store the following \emph{properties} for each vertex $u \in V$: $d_S(u)$, $d_{\overbar{S}}(u)$, and whether $u$ belongs to $S$ or $\overbar{S}$.
    The computation of these properties for all the vertices can be done in $\O(|E|)$ time.
    While computing the properties, one can also choose a vertex $u \in S$ that maximises $d_{\overbar{S}}(u) - d_S(u)$ (as in \cref{algoline:chose u with max diff}).

    \paragraph{Main loop}
    If $d_{\overbar{S}}(u) - d_S(u) > 0$, then $S$ is not a PDS.
    However, if $d_{\overbar{S}}(u) - d_S(u) = 0$, then $S$ is a PDS if and only if $|S| < \frac{|V|}{2}+1$ (so we decide \cref{algoline:check if PDS} in constant time).
    Therefore, if $S$ is not a $PDS$, set $S := \overbar{S} \cup \{u\}$ (as in \cref{algoline:update S by transferring u}), update the properties of all the vertices and select $u \in S$ maximising $d_{\overbar{S}}(u) - d_S(u)$ (as in \cref{algoline:chose u with max diff}) in $\O(|V|)$.
    Then, repeat from the beginning of the main loop.
\end{proof}

\begin{corollary}\label[corollary]{cor:two-app}
{\MClong} is polynomial-time $2$-approximable.
\end{corollary}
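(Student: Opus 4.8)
The plan is to derive the corollary directly from the constructive guarantee of \cref{theorem:algo2approx} together with an elementary upper bound on the optimum. By \cref{def:PDS}, every proportionally dense subgraph is induced on a set $S$ with $2 \leq |S| < |V|$, so the size of an optimal PDS satisfies $\mathrm{opt} \leq |V|-1$. This is the only ingredient needed beyond the theorem.

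First I would run the algorithm of \cref{theorem:algo2approx} to obtain, in $\O(|V| \cdot |E|)$ time, a PDS whose size $s$ is at least $\lceil |V|/2 \rceil$. It then remains to bound the approximation ratio $\mathrm{opt}/s \leq (|V|-1)/\lceil |V|/2 \rceil$. Splitting on the parity of $|V|$, this quantity equals $2 - 2/|V|$ when $|V|$ is even and $2 - 4/(|V|+1)$ when $|V|$ is odd; in either case it does not exceed $2$, which establishes the claimed ratio for {\MClong}.

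There is no genuine obstacle here, since all the real work was carried out in \cref{theorem:algo2approx}: the corollary is essentially the observation that a linear-size feasible solution approximates, within a factor of two, any solution of size at most $|V|-1$. I would also remark that the bound obtained is in fact strictly below $2$, which anticipates the sharper $(2-\tfrac{2}{\Delta+1})$ ratio announced for graphs of bounded maximum degree and motivates the refined upper bound on the size of a PDS developed in the remainder of the section.
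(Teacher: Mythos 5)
Your proof is correct and takes essentially the same approach as the paper's: invoke \cref{theorem:algo2approx} to obtain a PDS of size at least $\lceil |V|/2 \rceil$ and compare against the trivial bound $\mathrm{opt} \leq |V|-1$ coming from \cref{def:PDS}. The only difference is that you make the parity-based ratio computation explicit, which the paper leaves implicit.
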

\begin{proof}
For any graph $G=(V, E)$, \cref{algo:PDS at least n/2} yields a PDS of size at least $\ceil{\frac{|V|}{2}}$ and since any PDS has size at most $|V|-1$, we obtain a $2$-approximation algorithm.
\end{proof}

We proved the \APX-hardness of {\MC} in \cref{prop:apx-hard}, and hence we conclude the \APX-completeness of the problem.
\begin{corollary}
{\MClong} is \APX-complete.
\end{corollary}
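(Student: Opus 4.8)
The plan is to assemble \APX-completeness from its two defining halves: membership in \APX\ and \APX-hardness. Both ingredients are already in place among the preceding results, so the proof is essentially a synthesis, with a single point that deserves genuine care. For membership, I would simply invoke \cref{cor:two-app}: {\MClong} admits a polynomial-time $2$-approximation, and a problem with a constant-factor polynomial-time approximation lies in \APX\ by definition.

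For \APX-hardness I would appeal to \cref{prop:apx-hard}, which shows that approximating {\MClong} within $1.0026028$ is \NP-hard even on split graphs. The subtlety—and the main obstacle—is that a constant inapproximability threshold under $\P \neq \NP$ rules out a PTAS but does not, on its own, formally certify \APX-hardness in the closure sense (every \APX\ problem PTAS-reducing to {\MC}). To close this gap rigorously I would verify that the construction $\sigma$ underlying \cref{theorem: max pds hard on split} is an $L$-reduction from {\MIS} on cubic graphs, which is \APX-complete \cite{chleb2006apx}. The delicate part is the additive offset in the identity $opt(I') = |M| + 2 + opt(I)$: a priori the term $|M|+2 = \frac{3|V|}{2}+2$ dwarfs $opt(I)$ and threatens the linear bound $opt(I') \leq \alpha\cdot opt(I)$ that an $L$-reduction requires.

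This is rescued by a standard fact about bounded-degree graphs: on a cubic graph a maximum independent set satisfies $opt(I) \geq \frac{|V|}{4}$, so $opt(I') = \Theta(|V|) = \Theta(opt(I))$ and the first $L$-reduction condition holds with a fixed constant $\alpha$. The second condition is immediate, with constant $\beta = 1$: by \cref{lemma: split one can find bigger,lemma: split at least one non-neighbor}, any PDS of value $c'$ in $G'$ yields an independent set of value $c = c' - |M| - 2$ in $G$, whence $opt(I) - c = opt(I') - c'$. With both conditions confirmed, $\sigma$ is an approximation-preserving reduction from an \APX-complete problem, so {\MClong} is \APX-hard; combined with its membership in \APX\ via \cref{cor:two-app}, this establishes \APX-completeness.
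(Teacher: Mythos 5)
Your proof is correct, and its skeleton is exactly the paper's: the paper's entire argument for this corollary is the one-sentence synthesis ``membership in \APX\ follows from the $2$-approximation of \cref{cor:two-app}, hardness follows from \cref{prop:apx-hard}.'' Where you genuinely diverge is in not taking the \APX-hardness claim of \cref{prop:apx-hard} at face value. You correctly observe that an \NP-hard constant inapproximability gap only rules out a PTAS (assuming $\P \neq \NP$) and does not, by itself, establish \APX-hardness in the reduction-closure sense; the paper silently elides this distinction, since \cref{prop:apx-hard} asserts ``and hence the problem is \APX-hard'' directly from the gap. You patch the gap by checking that $\sigma$ is an $L$-reduction from \MIS\ on cubic graphs: the first condition holds because $opt(I) \geq |V|/4$ on cubic graphs makes the additive offset $|M|+2 = \frac{3|V|}{2}+2$ linear in $opt(I)$, and the second holds with $\beta = 1$ via \cref{lemma: split one can find bigger,lemma: split at least one non-neighbor}, which convert any PDS of size $c'$ into an independent set of size at least $c' - |M| - 2$. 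Both verifications are sound. What the paper's route buys is brevity; what yours buys is a formally self-contained hardness argument that actually supplies the justification the paper's \cref{prop:apx-hard} presupposes, at the cost of re-examining the reduction's quantitative behavior.
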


In the following we show how the approximation ratio can be improved with regard to the maximum degree $\Delta$ of the graph.

\begin{lemma}\label[lemma]{lemma:upperbound-PDS}
Let $G=(V,E)$ be a graph and $S \subset V$ such that $G[S]$ is a proportionally dense subgraph.
Then $|S|\leq \lfloor \frac{|V|\cdot (\Delta (G) -1)+1}{\Delta (G)}\rfloor$.
\end{lemma}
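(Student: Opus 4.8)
The plan is to recast the claimed bound as a lower bound on the complement. Writing $n := |V|$ and $\Delta := \Delta(G)$, the inequality $|S| \le \lfloor \frac{n(\Delta-1)+1}{\Delta}\rfloor$ is equivalent (since $|S|$ is an integer) to $\Delta|S| \le n(\Delta-1)+1$, and substituting $|S| = n - |\overbar{S}|$ this becomes $\Delta |\overbar{S}| \ge n-1$. So it suffices to prove $|\overbar{S}| \ge \frac{n-1}{\Delta}$.

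First I would isolate a single \emph{boundary} vertex. Since $G$ is connected and $2 \le |S| < n$, both $S$ and $\overbar{S}$ are nonempty, so at least one edge crosses the cut; pick $u \in S$ with $d_{\overbar{S}}(u) \ge 1$. Applying the second form of \cref{eq:vertex satisfied} to $u$ gives $|\overbar{S}| \cdot d_S(u) \ge (|S|-1)\cdot d_{\overbar{S}}(u) \ge |S|-1$, hence $d_S(u) \ge \frac{|S|-1}{|\overbar{S}|}$.

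Then I would feed in the degree bound. Using $d_{\overbar{S}}(u) \ge 1$ together with $d(u) = d_S(u) + d_{\overbar{S}}(u) \le \Delta$ yields $\Delta \ge \frac{|S|-1}{|\overbar{S}|} + 1$, i.e. $(\Delta-1)|\overbar{S}| \ge |S|-1$. Replacing $|S| = n - |\overbar{S}|$ and adding $|\overbar{S}|$ to both sides gives $\Delta |\overbar{S}| \ge n-1$, which is exactly the reformulated goal; dividing through by $\Delta$ and taking the floor finishes the argument.

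The computation is short, so there is no deep obstacle; the one thing to get right is the choice of strategy. Summing the satisfaction inequality over \emph{all} of $S$ only produces the weaker relation $|\overbar{S}| \cdot 2|E(S)| \ge (|S|-1)\cdot c$, where $c$ is the number of edges between $S$ and $\overbar{S}$, and this does not cleanly isolate $|S|$. The key observation is that a single vertex on the boundary—whose existence is guaranteed purely by connectivity—already forces $d_S(u) \ge \frac{|S|-1}{|\overbar{S}|}$, so that combining it with $d(u)\le\Delta$ suffices and no global edge-counting is needed.
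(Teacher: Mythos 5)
Your proof is correct and follows essentially the same route as the paper's: both isolate a single boundary vertex $u \in S$ with $d_{\overbar{S}}(u) \geq 1$ (guaranteed by connectivity), apply the second form of the satisfaction inequality, and combine it with the maximum-degree bound to get $(\Delta-1)\,|\overbar{S}| \geq |S|-1$, which rearranges to the claimed bound. The only difference is presentational—you phrase the conclusion as a lower bound on $|\overbar{S}|$, while the paper chains the inequalities $\frac{\Delta-1}{|S|-1} \geq \frac{d_S(u)}{|S|-1} \geq \frac{d_{\overbar{S}}(u)}{|\overbar{S}|} \geq \frac{1}{|V|-|S|}$ directly.
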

\begin{proof}
Let $v$ be a vertex of $S$ with at least one neighbor in $\overbar{S} = V\setminus S$ (such a vertex exists since $G$ is connected).
Since $G[S]$ is a PDS, $v$ fulfills the proportion condition, that is $\frac{\Delta (G) -1}{|S|-1} \geq \frac{d_{S} (v)}{|S|-1} \geq \frac{d_{\overbar{S}}(v)}{|\overbar{S}|} \geq \frac{1}{|V|-|S|} $ which implies that $|S|\leq \frac{|V|\cdot (\Delta (G) -1)+1}{\Delta (G)}$, and hence $|S|\leq \lfloor \frac{|V|\cdot (\Delta (G) -1)+1}{\Delta (G)} \rfloor$.
\end{proof}

\begin{proposition}\label[proposition]{prop:upperbound-degree}
	\textsc{Max Proportionally Dense Subgraph} is polynomial-time $(2-\frac{2}{\Delta+1})$-approximable.
\end{proposition}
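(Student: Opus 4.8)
The plan is to combine the lower bound on the size of the solution produced by \cref{algo:PDS at least n/2} with the upper bound on the size of any PDS given by \cref{lemma:upperbound-PDS}. By \cref{theorem:algo2approx}, in time $\O(|V|\cdot|E|)$ we can construct a PDS $S$ with $|S| \geq \lceil \frac{|V|}{2}\rceil \geq \frac{|V|}{2}$, and this will serve as our approximate solution. Writing $opt$ for the size of a maximum PDS, \cref{lemma:upperbound-PDS} gives $opt \leq \frac{|V|\cdot(\Delta-1)+1}{\Delta}$.

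First I would bound the ratio $\frac{opt}{|S|}$ directly. Using the two estimates above,
\[
\frac{opt}{|S|} \leq \frac{2\left(|V|\cdot(\Delta-1)+1\right)}{\Delta\cdot|V|} = \frac{2(\Delta-1)}{\Delta} + \frac{2}{\Delta\cdot|V|}.
\]
It then remains to verify that the right-hand side is at most $2-\frac{2}{\Delta+1}=\frac{2\Delta}{\Delta+1}$. Since a short computation gives $\frac{2\Delta}{\Delta+1}-\frac{2(\Delta-1)}{\Delta} = \frac{2}{\Delta(\Delta+1)}$, the desired inequality reduces to $\frac{2}{\Delta\cdot|V|}\leq \frac{2}{\Delta(\Delta+1)}$, that is, to $|V| \geq \Delta+1$.

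The only remaining point is to observe that $|V|\geq \Delta+1$ always holds, since a vertex of maximum degree $\Delta$ together with its neighbours already accounts for $\Delta+1$ distinct vertices. Hence $\frac{opt}{|S|}\leq 2-\frac{2}{\Delta+1}$, and as \cref{algo:PDS at least n/2} runs in polynomial time, this yields the claimed polynomial-time $(2-\frac{2}{\Delta+1})$-approximation. I do not expect any genuine obstacle here: the argument is essentially a one-line combination of the established lower and upper bounds, the only (entirely routine) work being the algebraic check above and the trivial observation that $|V|\geq\Delta+1$.
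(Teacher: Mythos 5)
Your proposal is correct and matches the paper's proof essentially step for step: both combine the $|S|\geq\lceil|V|/2\rceil$ guarantee of \cref{theorem:algo2approx} with the upper bound of \cref{lemma:upperbound-PDS} and then use $|V|\geq\Delta+1$ to bound the ratio by $2-\frac{2}{\Delta+1}$. The only (cosmetic) difference is that the paper substitutes $|V|=\Delta+1$ directly into the bound, implicitly using monotonicity in $|V|$, whereas you make that monotonicity explicit by splitting off the $\frac{2}{\Delta\cdot|V|}$ term.
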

\begin{proof}
Let $G=(V,E)$ be a graph, $S$ be a set returned by \cref{algo:PDS at least n/2} and $opt(G)$ denote the size of a PDS of maximum size in $G$.
According to \cref{lemma:upperbound-PDS} we have $opt(G)\leq \frac{|V|\cdot (\Delta (G) -1)+1}{\Delta (G)}$.
Therefore, since $|V| \geq \Delta+1$ and $opt(G) \geq |S|$, we obtain
\begin{flalign*}
\frac{opt(G)}{|S|} \leq \frac{2\cdot opt(G)}{|V|} &\leq \frac{2 \cdot (|V| \cdot (\Delta-1) +1)}{|V| \cdot \Delta}\\
&\leq \frac{2 \cdot ((\Delta+1) \cdot (\Delta-1) +1)}{(\Delta+1) \cdot \Delta} = 2 - \frac{2}{\Delta+1}\,.
\end{flalign*}
\end{proof}

\Cref{algo:PDS at least n/2} shows that the decision version associated with {\MC} is in {\FPT} when parameterized by its natural parameter $k$ (\textit{i.e.}\ the size of a PDS).
  Indeed, if the parameter $k \leq \lceil \frac{|V|}{2} \rceil$, then a PDS of size greater than $k$ can be found in polynomial time using \cref{algo:PDS at least n/2}.
  On the other hand, if $k > \lceil \frac{|V|}{2} \rceil$, then we have $|V| < 2k$ and an exhaustive search can be done in $O(2^{2k})$ operations. %

	\def\u{<\!u\!>}

	\section{Hamiltonian cubic graphs}\label{section:hamiltonian}
	In this section we prove that all Hamiltonian cubic graphs of order $n$, except two graphs (see \cref{fig:counterexample-hamiltonian}), have a proportionally dense subgraph of the maximum possible size $\floor{ \frac{2n+1}{3} }$ (see \cref{lemma:upperbound-PDS} for an upper bound on a PDS size).
	Furthermore, we show that such a PDS can be found in linear time if a Hamiltonian cycle is given in the input.
	Note that almost all cubic graphs are Hamiltonian, as proved in \cite{robinson1992almost}.
	
	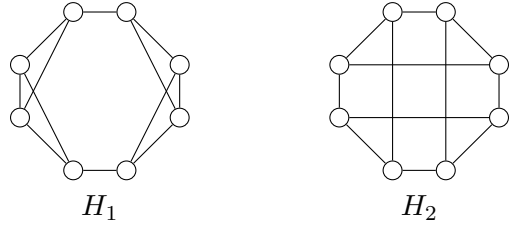
\begin{figure}[htpb!]
		\begin{center}
\begin{tikzpicture}[scale=.7,rotate=0] %

\tikzset{every node/.style={draw,label distance = -3pt,circle,fill=white,inner sep=2.5pt}}
	\begin{scope}
    	\node[draw=none,fill=none,rectangle] (h1) at (0.5,-3.7) {$H_1$};
		\node (7) at (0,0) {};
		\node (0) at (1,0) {};
		\node (6) at (-1,-1) {};
		\node (1) at (2,-1) {};
		\node (5) at (-1,-2) {};
		\node (2) at (2,-2) {};
		\node (4) at (0,-3) {};
		\node (3) at (1,-3) {};
		\draw (7) to (0);
		\draw (1) to (0);
		\draw (1) to (2);
		\draw (3) to (2);
		\draw (3) to (4);
		\draw (5) to (4);
		\draw (5) to (6);
		\draw (6) to (7);
		\draw (5) to (7);
		\draw (6) to (4);
		\draw (0) to (2);
		\draw (1) to (3);
		
        \node[draw=none,fill=none,rectangle] (h2) at (6.5,-3.7) {$H_2$};
        \node (17) at (6,0) {};
		\node (10) at (7,0) {};
		\node (16) at (5,-1) {};
		\node (11) at (8,-1) {};
		\node (15) at (5,-2) {};
		\node (12) at (8,-2) {};
		\node (14) at (6,-3) {};
		\node (13) at (7,-3) {};
		\draw (17) to (10);
		\draw (11) to (10);
		\draw (11) to (12);
		\draw (13) to (12);
		\draw (13) to (14);
		\draw (15) to (14);
		\draw (15) to (16);
		\draw (16) to (17);
		\draw (17) to (14);
		\draw (10) to (13);
		\draw (15) to (12);
		\draw (16) to (11);

	\end{scope}
	
\end{tikzpicture}
 		\end{center}
        \caption{Two Hamiltonian cubic graphs with $8$ vertices without PDS of size $\floor{\frac{2 \times 8+1}{3}}=5$.}\label{fig:counterexample-hamiltonian}
	\end{figure}

We represent a Hamiltonian cubic graph of order $n$ as a cycle with the vertices labeled in such a way that $(0,1,\dots,n-1)$ is a Hamiltonian cycle and a set of edges between non-successive vertices in the Hamiltonian cycle.
We always refer to this cycle when we say \emph{the} Hamiltonian cycle of a graph.
To avoid tedious notations, we use $i \in \mathbb{N}$ (with $0\in \mathbb{N}$) to refer to the vertex labeled by $i \mmod n$.

 \begin{definition}
	Let $G=(V,E)$ be a Hamiltonian cubic graph, $u \in V$.
    Let $P$ be a set of successive vertices in the Hamiltonian cycle labeled with $u$, $u+1$, \dots, $u-k-1$, with $k$ such that $|V| - 2 \geq k \geq 2$.
    The set $P$ is called a \emph{shift} if the first and the last vertices of the sequence, $u$ and $u-k-1$, are such that $d_P(u)=d_P(u-k-1)=2$.
 \end{definition}

   Notice that a shift $P$ contains $|V|-k$ vertices.
   Also, any vertex of $P$ has at least two neighbors in $P$.
   Consequently, if $k \geq \ceil{ \frac{|V|-1}{3} }$, then $|P| \leq \floor{\frac{2\cdot|V|+1}{3}}$, and the following holds for any $u \in P$:
   \[
        \frac{d_P(u)}{|P|-1} \geq \frac{2}{|V|-k-1} \geq \frac{1}{k} \geq \frac{d_{\overbar{P}}(u)}{|\overbar{P}|}\,.
   \]
   Thus, $G[P]$ is a PDS.
    If $k =  \ceil{ \frac{|V|-1}{3} }$, then $G[P]$ is a PDS of the maximum possible size $\floor{\frac{2\cdot|V|+1}{3}}$ (see \cref{lemma:upperbound-PDS}) and we call $P$ a \emph{good shift}.
    On the other hand, if $k=\ceil{ \frac{|V|-1}{3} }-1$, then the size of $P$ is one vertex larger than the size of the maximum possible PDS, and thus $G[P]$ is not a PDS.
    Such a shift is called an \emph{almost good shift}.

	In the following, we prove that either $G$ contains a good shift or we can find an almost good shift $P$ and a vertex $v\in P$ such that $G[P\setminus \{v\}]$ is a proportionally dense subgraph of the maximum possible size $\floor{\frac{2\cdot|V|+1}{3}}$.

	\begin{definition}
		Let $G=(V,E)$ be a Hamiltonian cubic graph.
        For each $v \in V$, we denote by $c(v)$ the non-successive neighbor of $v$ in the Hamiltonian cycle.
				Additionally, we define the subsets of vertices $L$ and $R$ in the following way for $k:=\ceil{ \frac{|V|-1}{3} }$:
		\begin{compactitem}
			\item $L:=\{u\in V : c(u) \in \{u-k, u-k+1,,\dots,u-2\}\}$;
			\item $R:=\{u\in V : c(u) \in \{u+2,u+3,\dots,u+k\}\}$.
		\end{compactitem}
	\end{definition}

	For a Hamiltonian cubic graph $G=(V,E)$ and $u \in V$, notice that $u \in L$ if and only if $c(u) \in R$, and symmetrically $u \in R$ if and only if $c(u) \in L$.
	This particularly implies that $|L| = |R| \leq \frac{|V|}{2}$.
	Moreover, notice that for a vertex $u\in L$, the set $P:=\{u,u+1,\dots , u-k-1\}$ cannot be a good shift, since $d_P(u)= 1$.
	In the same way, if $u\in R$, the set $P:=\{u+k+1,u+k+2,\dots ,u-1,u\}$ cannot be a good shift, since $d_P(u)= 1$.
	These observations are summed up in the following lemma.

	\begin{lemma}\label[lemma]{lemma:if not L then neighbor is R}
		Let $G=(V,E)$ be a Hamiltonian cubic graph, $k := \ceil{\frac{|V|-1}{3}}$ and $u \in V$.
		If $u \notin L$ and $(u-(k+1)) \notin R$, then the set $\{u,u+1,\dots,u-(k+1)-1,u-(k+1)\}$ is a good shift.
		Symmetrically, if $u \notin R$ and $(u+k+1) \notin L$, then the set $\{u+k+1,u+k+2,\dots ,u-1,u\}$ is a good shift.
	\end{lemma}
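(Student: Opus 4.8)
The plan is to show that $P := \{u, u+1, \dots, u-k-1\}$ (recall $u-(k+1) = u-k-1$) is a \emph{shift}; since $k = \ceil{\frac{|V|-1}{3}}$ and $|P| = |V|-k$, the discussion preceding the lemma then immediately upgrades it to a \emph{good shift}. By the definition of a shift, it therefore suffices to prove that the two endpoints $u$ and $w := u-k-1$ each satisfy $d_P(\cdot) = 2$. Since $G$ is cubic, every vertex has exactly three neighbours, namely its two cycle-neighbours and its chord $c(\cdot)$, so each endpoint degree is determined by counting how many of these three land in $P$.

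First I would handle the endpoint $u$. Reading off the labels, $u+1 \in P$ while $u-1 = u+(|V|-1) \notin P$ (the forward extent of $P$ reaches only $u+(|V|-k-1)$, and $k \geq 2$); hence $d_P(u) = 2$ precisely when $c(u) \in P$. To obtain this I would use the hypothesis $u \notin L$. Since $c(u)$ is the non-successive neighbour it avoids $\{u-1, u, u+1\}$, and $u \notin L$ means $c(u) \notin \{u-k, \dots, u-2\}$; together these exclude exactly the contiguous cyclic block $\{u-k, \dots, u-1, u, u+1\}$, whose complement is the arc $\{u+2, \dots, u-k-1\} = P \setminus \{u, u+1\} \subseteq P$. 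Thus $c(u) \in P$ and $d_P(u) = 2$.

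The endpoint $w := u-k-1$ is treated the same way after rewriting $P = \{w+k+1, \dots, w-1, w\}$ in $w$-coordinates: now $w-1 \in P$ while $w+1 = u-k \notin P$, so $d_P(w) = 2$ exactly when $c(w) \in P$. Here the relevant hypothesis is $w = u-(k+1) \notin R$, which gives $c(w) \notin \{w+2, \dots, w+k\}$; combined with $c(w) \notin \{w-1, w, w+1\}$ this confines $c(w)$ to the arc $\{w+k+1, \dots, w-2\} \subseteq P$, so $d_P(w) = 2$. Both endpoints then have degree two in $P$, so $P$ is a shift and hence a good shift, proving the first claim. The second claim is obtained by applying the first to the vertex $u+k+1$ in place of $u$: its two hypotheses become $u+k+1 \notin L$ and $u \notin R$, and its conclusion is precisely the arc $\{u+k+1, \dots, u-1, u\}$, so it needs no separate argument (equivalently, it follows from the mirror symmetry of the cycle).

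The step requiring the most care is the cyclic index bookkeeping: one must verify that the vertices ruled out for each endpoint's chord really form a single contiguous arc whose complement is exactly the desired sub-arc of $P$. This relies only on the elementary bounds $k \geq 2$ and $2k < |V|$ that follow from $k = \ceil{\frac{|V|-1}{3}}$, which prevent the arcs defining $L$ and $R$ from overlapping or wrapping incorrectly; once these are in place, the membership computations are routine.
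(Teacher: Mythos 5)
Your proof is correct and takes essentially the same approach as the paper's: both reduce the claim to checking that the two endpoints $u$ and $u-(k+1)$ have exactly two neighbours in $P$, which follows because $u \notin L$ and $u-(k+1) \notin R$ force the chords $c(u)$ and $c(u-(k+1))$ to land inside $P$. The paper states this in one line as ``straightforward,'' while you supply the explicit cyclic index bookkeeping and obtain the symmetric case by applying the first case to $u+k+1$ (where the paper just says the other case is similar).
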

	\begin{proof}
		The proof is straightforward.
		Since $u \notin L$ and $(u-(k+1)) \notin R$, we have $d_P(u)=d_P(u-(k+1))=2$, where $P:=\{u,u+1,\dots ,u-(k+1)\}$.
		The other case is similar.
	\end{proof}

An important consequence of \cref{lemma:if not L then neighbor is R} is that if $G$ is a Hamiltonian cubic graph with no good shift, then we can define subsets of vertices that must be either in $L$ or in $R$.
To define such subsets we introduce the following notation.

\begin{definition}
		Let $G=(V,E)$ be a Hamiltonian cubic graph and $u \in V$.
        We define the vertex subset $\u:=\{v \in V : v \equiv u \ (\mathrm{mod}\ (k+1))\}$ where $k := \ceil{\frac{|V|-1}{3}}$.
\end{definition}

	\begin{corollary}\label[corollary]{corollary:propagation of L/R}
		Let $G=(V,E)$ be a Hamiltonian cubic graph with no good shift and $u\in V$:
        \begin{compactitem}
			\item if  $u \notin R$ then $\u \subseteq L$,
			\item if $u \notin L$, then $\u \subseteq R$,
			\item $|L|=|R|=\frac{|V|}{2}$.
		\end{compactitem}
	\end{corollary}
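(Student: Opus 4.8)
The plan is to derive all three bullets from \cref{lemma:if not L then neighbor is R}, exploiting the fact that $G$ has no good shift so the hypotheses of that lemma can never be fulfilled. First I would prove the propagation claim. Suppose $u \notin R$. Consider the vertex $v := u + (k+1)$. If it were the case that $v \notin L$, then the pair $(v, v-(k+1)) = (v, u)$ would satisfy the hypothesis of the symmetric half of \cref{lemma:if not L then neighbor is R} (namely $u \notin R$ and $(u+k+1) = v \notin L$), yielding a good shift, contradicting our assumption on $G$. Hence $v = u+(k+1) \in L$. Now the key observation is that a vertex in $L$ cannot be in $R$: by definition $c(u)\in\{u-k,\dots,u-2\}$ for $u\in L$, while $c(u)\in\{u+2,\dots,u+k\}$ for $u\in R$, and (since each vertex has a unique non-successive neighbor $c(u)$) these two conditions are mutually exclusive. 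So $v=u+(k+1) \in L$ forces $v \notin R$, and we may repeat the argument with $v$ in place of $u$, obtaining $u+2(k+1) \in L$, and so on. Iterating in both directions along the arithmetic progression modulo $k+1$ shows that every element of $\u$ lies in $L$, which is the first bullet. The second bullet follows by the completely symmetric argument starting from $u \notin L$ and using the first half of the lemma.

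For the third bullet I would use the counting relation already recorded in the excerpt, namely $u \in L$ if and only if $c(u) \in R$, which gives $|L| = |R| \le \frac{|V|}{2}$. It remains to rule out strict inequality. The idea is that $L$ and $R$ are disjoint (shown above), so $|L| + |R| \le |V|$, and if either were strictly smaller than $\frac{|V|}{2}$ there would exist a vertex outside both $L$ and $R$. But if some $u \notin L$ and $u \notin R$, then in particular $u \notin R$, so by the first bullet $\u \subseteq L$, which places $u$ itself in $L$ (since $u \in \u$), a contradiction. Hence every vertex lies in $L \cup R$, giving $|L| + |R| = |V|$; combined with $|L| = |R|$ this yields $|L| = |R| = \frac{|V|}{2}$.

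The one point demanding a little care, and the place I would be most cautious, is the boundary bookkeeping in the propagation step: I must confirm that the instance of \cref{lemma:if not L then neighbor is R} I invoke uses the correct endpoints. Concretely, when I claim $u+(k+1) \in L$, I am applying the symmetric half of the lemma with its ``$u$'' set to my original $u$, reading off $d_P(u) = d_P(u+k+1) = 2$ as impossible because no good shift exists; the reindexing $v = u+(k+1)$ then has to be checked against the definitions of $L$ and $R$ modulo $|V|$ so that membership is well defined on the residue class $\u$. The arithmetic is routine given the mutual exclusivity of the $L$ and $R$ conditions, but writing it out so that each application of the lemma lands on an adjacent element of the progression (and that the progression genuinely closes up modulo $k+1$) is the only nontrivial part of the argument.
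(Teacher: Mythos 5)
Your proof is correct and is essentially the paper's own argument: you iterate the contrapositive of \cref{lemma:if not L then neighbor is R} (valid because $G$ has no good shift) along the progression $u,\,u+(k+1),\,u+2(k+1),\dots$, which closes up modulo $|V|$ onto the whole residue class, and you obtain the counting claim by combining the coverage $V=L\cup R$ with the fact that $c$ exchanges $L$ and $R$. Your only (welcome) addition is spelling out the disjointness $L\cap R=\emptyset$ needed to re-apply the lemma at each step of the propagation, a fact the paper uses implicitly.
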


	\begin{proof}
		First, notice that for any integer $\delta\geq 1$, $u-\delta \cdot (k+1)\equiv u-\delta \cdot (k+1)+|V|\cdot \delta \cdot (k+1) \ (\mathrm{mod}\ |V|) \equiv u+\delta \cdot (|V|-1) \cdot (k+1) \ (\mathrm{mod}\ |V|)$.
        Moreover, $u\equiv u+ |V|\cdot (k+1) \ (\mathrm{mod}\ |V|)$.
        Thus, we have $\{u - \delta \cdot (k+1): \delta \geq 1, \delta \in \mathbb{N}\}=\{u + \delta \cdot (k+1): \delta \geq 1, \delta \in \mathbb{N}\}=\u$.

		Now, if $u \notin R$, then, with our assumption that $G$ has no good shift and \cref{lemma:if not L then neighbor is R}, we derive that $\u=\{u + \delta \cdot (k+1): \delta \geq 1, \delta \in \mathbb{N}\}\subseteq L$.
		Symmetrically, if $u \notin L$, then $\{u - \delta \cdot (k+1): \delta \geq 1, \delta \in \mathbb{N}\}\subseteq R$.

		This implies that for any vertex $u\in V$, either $u\in L$ or $u\in R$.
        Finally, since $u\in L$ if and only if $c(u)\in R$ and $u\in R$ if and only if $c(u)\in L$, then it is obvious that $|L| = |R| = \frac{|V|}{2}$.
	\end{proof}

	Let $G=(V,E)$ be a Hamiltonian cubic graph with no good shift and $d:=\gcd(k+1,|V|)$, where $\gcd(k+1,|V|)$ is the greatest common divisor of $(k+1)$ and $|V|$.
	We show that $V$ can be partitioned into $d$ subsets of vertices $<\!0\!>$, $<\!1\!>$, \dots, $<\!d-1\!>$.
	This partition will be useful to find an \emph{almost good shift} $P$ and a vertex to remove from $P$ in order to obtain a PDS in $G$.
	This result comes from a basic property of the cyclic group $\mathbb{Z}/n\mathbb{Z}$ that we recall in the following lemma.

\begin{lemma}\label[lemma]{lemma:gcd}
	Let $\alpha \geq 1$ and $\beta \geq 1$ be positive integers, and $d:=\gcd(\alpha,\beta)$.
	If all integers are considered $\mathrm{mod}\ \alpha$, then $\{0,1,\dots , \alpha-1\}=\cup_{i\in \{0,1,\dots , d-1\}}<\!i\!>$ where $<\!i\!>:=\{l : l \equiv i \ (\mathrm{mod}\ \beta)$ and $l \in \{0,1,\dots,\alpha-1\}\}$.
    Moreover, for any $i,j\in \{0,1,\dots , d-1\}$ with $i \neq j$, $<\!i\!>\cap <\!j\!>=\emptyset$.
\end{lemma}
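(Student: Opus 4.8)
The plan is to recognise the sets $<\!i\!>$ as the cosets of the cyclic subgroup $\langle \beta \rangle$ of $\mathbb{Z}/\alpha\mathbb{Z}$ generated by $\beta$, and then to invoke the standard description of that subgroup in terms of $d=\gcd(\alpha,\beta)$. Working throughout in $\mathbb{Z}/\alpha\mathbb{Z}$ (as permitted by ``all integers are considered $\mathrm{mod}\ \alpha$''), the relation $l \equiv i\ (\mathrm{mod}\ \beta)$ has to be read as stating that $l-i$ is a multiple of $\beta$ modulo $\alpha$; hence $<\!i\!> = \{(i + t\beta) \bmod \alpha : t \in \mathbb{Z}\}$ is exactly the coset $i + \langle \beta \rangle$. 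This reformulation is the only genuinely delicate point, since the naive residue class modulo $\beta$ would fail to partition $\{0,\dots,\alpha-1\}$ unless $\beta \mid \alpha$.

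First I would establish the key number-theoretic identity $\langle \beta \rangle = \langle d \rangle$ in $\mathbb{Z}/\alpha\mathbb{Z}$, where $\langle d \rangle = \{0, d, 2d, \dots, \alpha - d\}$ is the set of multiples of $d$. The inclusion $\langle \beta \rangle \subseteq \langle d \rangle$ is immediate, since $d \mid \beta$ and $d \mid \alpha$ imply that every element $t\beta \bmod \alpha$ is a multiple of $d$. For the reverse inclusion I would use Bézout: choosing integers $a,b$ with $a\alpha + b\beta = d$ gives $d \equiv b\beta\ (\mathrm{mod}\ \alpha)$, so $d \in \langle \beta \rangle$ and therefore all its multiples lie in $\langle \beta \rangle$ as well. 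Since $d \mid \alpha$, the subgroup $\langle d \rangle$ has exactly $\alpha/d$ elements, and consequently there are exactly $d$ distinct cosets.

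Next comes disjointness. Because $d \mid \alpha$, reduction modulo $\alpha$ preserves congruence modulo $d$, so every element of $<\!i\!> = i + \langle d \rangle$ is congruent to $i$ modulo $d$. Thus if $i, j \in \{0, 1, \dots, d-1\}$ satisfy $<\!i\!> \cap <\!j\!> \neq \emptyset$, then $i \equiv j\ (\mathrm{mod}\ d)$, and since $0 \le i,j \le d-1$ this forces $i = j$. For the covering, any $l \in \{0,1,\dots,\alpha-1\}$ satisfies $l - (l \bmod d) \in \langle d \rangle = \langle \beta \rangle$, so $l \in <\!l \bmod d\!>$ with $l \bmod d \in \{0,1,\dots,d-1\}$; hence the $d$ sets cover $\{0,1,\dots,\alpha-1\}$.

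Once the coset reformulation is in place, everything reduces to the textbook fact that $\langle \beta \rangle = \langle \gcd(\alpha,\beta)\rangle$ has index $d$ in $\mathbb{Z}/\alpha\mathbb{Z}$, and the desired partition follows by the routine coset arithmetic above. The main obstacle is therefore purely interpretive rather than computational: getting the definition of $<\!i\!>$ to mean the full coset, after which the disjointness and covering arguments are short and elementary.
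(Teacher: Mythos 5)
Your proof is correct, and its second half takes a genuinely different route from the paper's. Both arguments share the same core use of B\'ezout's identity: the paper writes $u = a\cdot d + b$ and $d = c\cdot\beta + f\cdot\alpha$ to get $u \equiv b + ac\beta \ (\mathrm{mod}\ \alpha)$, hence $u \in \;<\!b\!>$, which is your inclusion $\langle d\rangle \subseteq \langle\beta\rangle$ packaged element by element. The divergence is in the disjointness proof. The paper never isolates the structural identity $\langle\beta\rangle = \langle d\rangle$; instead it computes the cardinality of each class, showing that the smallest $p$ with $u + p\beta \equiv u\ (\mathrm{mod}\ \alpha)$ equals $\alpha/d$ by a two-way divisibility argument (writing $\alpha = \alpha' d$, $\beta = \beta' d$ with $\gcd(\alpha',\beta')=1$ and showing $p \mid \alpha'$ and $\alpha' \mid p$), and then concludes disjointness by counting: $d$ classes, pairwise equal-or-disjoint, each of size $\alpha/d$, covering all $\alpha$ residues, must be pairwise distinct. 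You instead get disjointness directly from the invariant that every element of $<\!i\!>\; = i + \langle d\rangle$ is congruent to $i$ modulo $d$ (using $d \mid \alpha$), so distinct representatives in $\{0,\dots,d-1\}$ force disjoint classes --- no cardinality computation and no counting step. Your route is shorter and cleaner, and it sidesteps the most technical passage of the paper's proof (which, incidentally, contains a typo: ``$u+\alpha'\cdot k'\cdot d$'' should read $u+\alpha'\cdot\beta'\cdot d$). What the paper's route buys is the explicit statement $|\!<\!u\!>\!| = \alpha/d$, a fact implicitly relied on later in the paper (e.g.\ to rule out $\gcd(k+1,|V|)=3$, since three classes of equal size cannot be split into $L$ and $R$ with $|L|=|R|=\frac{|V|}{2}$); but your coset identification yields that same cardinality for free, since all cosets of a subgroup have equal size. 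You are also right that the only delicate point is reading $<\!i\!>$ as the coset $i + \langle\beta\rangle$ in $\mathbb{Z}/\alpha\mathbb{Z}$ rather than a literal residue class modulo $\beta$; this is indeed how the paper uses the notation, both in this lemma and in its application to vertex labels on the Hamiltonian cycle.
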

\begin{proof}
	First, we prove that for any $u \geq d$, $u \in <\!i\!>$ for some $i\in \{0,1,\cdots , d-1\}$.
	Let $u \geq d$.
	Then there exist two integers $a,b$ with $b \leq d-1$, such that $u=a\cdot d+b$.
	Moreover, there exist two integers $c,f$ such that $c\cdot \beta+f \cdot \alpha=d$ since $d=\gcd(\alpha,\beta)$.
	Then, $u=a \cdot c \cdot \beta +a \cdot f \cdot \alpha +b \equiv b + a\cdot c \cdot \beta\ (\mathrm{mod}\ \alpha)$.
	Thus, $u\in <\!b\!>$ with $b\leq d-1$.
	This proves that any integer is in a set $<\!i\!>$ for some $i\leq d-1$, \textit{i.e.}\ $\{0,1,\cdots , \alpha-1\}=\cup_{i\in \{0,1,\cdots , d-1\}}<\!i\!>$.

	To prove the second part of the statement, we first show that $\alpha = |\!<\!u\!>\!| \cdot d$ for any $u\in \{0,1,\cdots ,d-1\}$.
	Let $u\in \{0,1,\cdots ,d-1\}$ and $p \geq 1$ be the smallest integer such that $u+p \cdot \beta \equiv u \ (\mathrm{mod}\ \alpha$).
	Notice that $|\!<\!u\!>\!|=p$ and let us show that $\alpha=p \cdot d$.
	Let $\alpha',\beta'$ be two integers such that $\alpha=\alpha' \cdot d$, $\beta=\beta' \cdot d$ and $\gcd(\alpha',\beta')=1$.
	We prove that $\alpha'=p$ by verifying that $\alpha'$ divides $p$ and $p$ divides $\alpha'$.
	First, notice that $u + \alpha' \cdot \beta=u + \alpha' \cdot k'\cdot d = u + \alpha \cdot \beta'\equiv u\ (\mathrm{mod}\ \alpha)$.
	Thus, $p$ divides $\alpha'$.
	On the other hand, recall that $u+p\cdot \beta \equiv u\ (\mathrm{mod}\ \alpha)$ and notice that $u+p\cdot \beta = u+p \cdot \beta' \cdot d$, then $p \cdot \beta' \cdot d \equiv 0\ (\mathrm{mod}\ \alpha)$.
	This implies that $\alpha$ divides $p\cdot \beta' \cdot d$, and thus $\alpha'$ divides $p \cdot \beta'$.
	Since $\gcd(\alpha',\beta')=1$, $\alpha'$ divides $p$.
	Now, notice that two sets $\!<\!i\!>\!$, $\!<\!j\!>\!$ for some integers $i,j$ are either equal or disjoint.
	Since for any $u \in \{0,1,\cdots,\alpha-1\}$ we have $|\!<\!u\!>\!| = \frac{\alpha}{d}$, then obviously all sets $\!<\!i\!>\!$, $i\in \{0,1,\cdots , d-1\}$ are disjoints.
\end{proof}

In the following lemma we summarize the possible values of $\gcd(n,k+1)$ for some specific values of $n$ and $k$.

	\begin{lemma}\label[lemma]{lemma:gcd-cases}
		Let $n$ be an even integer, $n\geq 4$. Then:
		\begin{compactitem}
			\item if $n=3k-1$, then $\gcd(n,k+1)\in \{2,4\}$,
			\item if $n=3k$, then $\gcd(n,k+1)\in \{1,3\}$,
			\item if $n=3k+1$, then $\gcd(n,k+1) =2$.
		\end{compactitem}
	\end{lemma}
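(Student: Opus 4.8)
The plan is to handle the three cases separately and, in each, to collapse the computation of $\gcd(n,k+1)$ down to a greatest common divisor with a small constant by a single step of the Euclidean algorithm, after which the hypothesis that $n$ is even finishes the job.

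First I would record the three elementary identities obtained by writing $n$ as a multiple of $(k+1)$ plus a remainder. When $n=3k-1$ we have $n=3(k+1)-4$, so $\gcd(n,k+1)=\gcd(4,k+1)$; when $n=3k$ we have $n=3(k+1)-3$, so $\gcd(n,k+1)=\gcd(3,k+1)$; and when $n=3k+1$ we have $n=3(k+1)-2$, so $\gcd(n,k+1)=\gcd(2,k+1)$. Each of these already bounds the gcd from above, since it must divide $4$, $3$, and $2$ respectively. In particular the case $n=3k$ is immediate: $\gcd(3,k+1)$ divides $3$ and therefore lies in $\{1,3\}$, with no further input needed.

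For the remaining two cases I would invoke the parity hypothesis to exclude the value $1$. If $n=3k-1$ is even, then $3k$ is odd, so $k$ is odd and $k+1$ is even; hence $2\mid\gcd(4,k+1)$, forcing the gcd into $\{2,4\}$ (the precise value being $4$ exactly when $4\mid k+1$, but the weaker membership statement is all the lemma asks). Similarly, if $n=3k+1$ is even, then again $k$ is odd and $k+1$ is even, so $\gcd(2,k+1)=2$ exactly.

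This argument is entirely routine, and I do not expect any genuine obstacle. The only point demanding care is the parity bookkeeping, namely the deductions that $n$ even forces $k$ odd in the first and third cases (which is what rules out the gcd equalling $1$) while forcing $k$ even in the middle case (where it is in fact irrelevant, the conclusion there following solely from divisibility of $3$).
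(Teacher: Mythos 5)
Your proof is correct and follows essentially the same route as the paper: one or two steps of the Euclidean algorithm to reduce $\gcd(n,k+1)$ to a gcd with a small constant ($4$, $3$, or $2$), followed by the parity observation that $n$ even forces $k$ odd in the first and third cases. The only cosmetic difference is that the paper reduces to $\gcd(4,k-3)$ where you reduce to $\gcd(4,k+1)$, which is the same thing since $k+1\equiv k-3 \pmod 4$; the paper also treats only the case $n=3k-1$ explicitly and leaves the rest ``to the same reasoning,'' whereas you spell out all three.
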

  \begin{proof}
      Consider the case $n = 3k-1$, then
      $d := \gcd(k+1,3k-1) = \gcd(k+1,3k-1-2(k+1)) = \gcd(k+1, k-3) = \gcd(4,k-3)$.
      As $n$ is even, then $k$ is odd and $d \in \{2,4\}$.
      The other cases can be proved using the same reasoning.
  \end{proof}
	Firstly, we show that if $|V|=3k$, then there is always a good shift in $G$.

    \begin{corollary}\label[corollary]{corollary:gcd-odd}
		Let $G$ be a Hamiltonian cubic graph with $3k$ vertices, $k\geq 2$.
		Then $G$ has a good shift.
	\end{corollary}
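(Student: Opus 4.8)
The plan is to argue by contradiction, extracting a parity obstruction from the congruence structure that \cref{corollary:propagation of L/R} forces on a graph with no good shift. First I would fix the bookkeeping: since $|V|=3k$ we have $k=\ceil{\frac{|V|-1}{3}}$, so the sets $L$, $R$ and the classes $\u$ are all defined with respect to this value of $k$. Moreover, because $G$ is cubic its order $|V|=3k$ is even, and hence $k$ is even; set $d:=\gcd(|V|,k+1)$.

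Assume now that $G$ has no good shift. Then \cref{corollary:propagation of L/R} gives three facts I would use in tandem: every vertex lies in exactly one of $L$ and $R$ (they are disjoint and cover $V$), we have $|L|=|R|=\frac{|V|}{2}$, and each class $\u$ is contained entirely in $L$ or entirely in $R$ (if $u\notin R$ then $\u\subseteq L$, and if $u\notin L$ then $\u\subseteq R$). Consequently both $L$ and $R$ are unions of whole congruence classes modulo $k+1$.

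Next I would invoke \cref{lemma:gcd} with $\alpha=|V|$ and $\beta=k+1$: the classes $\u$ for $u\in\{0,1,\dots,d-1\}$ partition $V$, and each has the same cardinality $\frac{|V|}{d}$. Writing $L$ as a union of exactly $t$ of these classes, I get $|L|=t\cdot\frac{|V|}{d}$, so the equality $|L|=\frac{|V|}{2}$ forces $t=\frac{d}{2}$; in particular $d$ must be even. This is the crux of the argument, and the only genuinely delicate point: it relies simultaneously on the classes all having equal size (from \cref{lemma:gcd}) and on each class being monochromatic for the $L$/$R$ split (from \cref{corollary:propagation of L/R}).

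To finish, I would simply contradict the parity of $d$. Applying \cref{lemma:gcd-cases} in the case $|V|=3k$ yields $d=\gcd(|V|,k+1)\in\{1,3\}$, both of which are odd. This is incompatible with $d$ being even, so the assumption that $G$ has no good shift cannot hold. Therefore every Hamiltonian cubic graph on $3k$ vertices, $k\geq 2$, admits a good shift, as claimed.
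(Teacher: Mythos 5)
Your proof is correct and follows essentially the same route as the paper's: both argue by contradiction, combining \cref{corollary:propagation of L/R}, \cref{lemma:gcd}, and \cref{lemma:gcd-cases} to pit the equal-sized, $L$/$R$-monochromatic congruence classes modulo $k+1$ against the requirement $|L|=|R|=\frac{|V|}{2}$. The only difference is cosmetic: the paper handles the cases $d=1$ and $d=3$ separately, whereas you fold them into a single parity statement (a union of equal-sized monochromatic classes forces $d$ to be even, while $d\in\{1,3\}$ is odd), which is a marginally cleaner write-up of the same argument.
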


	\begin{proof}
		Suppose by contradiction that there is no good shift in $G=(V,E)$.
		Notice that if $|V|=3k$, then $k = \ceil{\frac{|V|-1}{3}}$. Let $d:=\gcd(k+1,|V|)$.
		From \cref{lemma:gcd-cases} we get $d \in \{1,3\}$.
		According to \cref{corollary:propagation of L/R}, $|L|=|R|=\frac{|V|}{2}$.
		If $d=1$, then $V=\;<\!0\!>$ (\cref{lemma:gcd}), and hence $V=L$ or $V=R$, which is impossible.
		If $d=3$, then $|V|=\;<\!0\!>\cup <\!1\!> \cup <\!2\!>$ (\cref{lemma:gcd}).
		According to \cref{corollary:propagation of L/R}, $<\!i\!>\subseteq L$ or $<\!i\!>\subseteq R$ for any $i\in \{0,1,2\}$, and thus $|R|\neq |L|$, which is not possible.
	\end{proof}

	From \cref{lemma:gcd} and \cref{lemma:gcd-cases}, if a Hamiltonian cubic graph $G=(V,E)$ has no good shift, then $V$ can be written as $V=\;<\!0\!> \cup <\!1\!> \cup <\!2\!> \cup <\!3\!>$ (we may have $<\!0\!>\;=\;<\!2\!>$ and $<\!1\!>\;=\;<\!3\!>$). Hence, those graphs can be split into two categories:
	\begin{compactitem}
		\item \textit{type RLRL:} for any vertices $i,i+1$ with $i\in V$, we have $i\in L$ and $i+1\in R$, or $i\in R$ and $i+1\in L$.
		In this case, we always assume without loss of generality that $R=\;<\!0\!> \cup <\!2\!>$ and $L=\;<\!1\!> \cup <\!3\!>$.
		\item \textit{type RRLL:} there exist two vertices $i,i+1$ with $i\in V$ such that $i,i+1\in L$ or $i,i+1\in R$.
		In this case, we always assume without loss of generality that $R=\;<\!0\!> \cup <\!1\!>$ and $L=\;<\!2\!> \cup <\!3\!>$.
	\end{compactitem}
	Now, we show that if a Hamiltonian cubic graph $G$ has no good shift, then there exists
    an almost good shift $P$ in $G$ (\cref{lemma:goodshift+1}) and a vertex $v \in P$ such that $G[P\setminus\{v\}]$ is a PDS (\cref{lemma:hamiltonian<20} and \cref{theorem:hamiltoniancubic}).

	\begin{lemma}\label[lemma]{lemma:goodshift+1}
	Any Hamiltonian cubic graph with no good shift has an almost good shift.
	\end{lemma}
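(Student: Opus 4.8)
The plan is to assume that $G=(V,E)$ has no good shift and to exhibit a single almost good shift directly, without appealing to the $RLRL$/$RRLL$ dichotomy. Write $n:=|V|$ and $k:=\ceil{\frac{|V|-1}{3}}$. Since $G$ is cubic, $n$ is even; and since $G$ has no good shift, \cref{corollary:gcd-odd} rules out $n=3k$, so $n\in\{3k-1,3k+1\}$. In either case any common divisor of $n$ and $k$ also divides $3k$ and hence divides $|3k-n|=1$, so $\gcd(n,k)=1$. Moreover \cref{corollary:propagation of L/R} gives the partition $V=L\cup R$ with $L\cap R=\emptyset$ and $|L|=|R|=\frac{n}{2}$, and in this regime (even $n\ge 8$) we have $k\ge 3$, so the shift parameter $k-1$ is admissible.

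First I would pin down when the candidate set $P:=\{u,u+1,\dots,u-k\}$ (the shift with parameter $k-1$, of size $n-k+1$) is an almost good shift, i.e.\ when its two boundary vertices $u$ and $w:=u-k$ each have two neighbours inside $P$; its complement is the block $\{u-k+1,\dots,u-1\}$ of $k-1$ consecutive vertices. For the right boundary $u$, the cycle-neighbour $u+1$ lies in $P$, so $d_P(u)=2$ as soon as the chord $c(u)$ lies in $P$; and if $u\in R$ then $c(u)\in\{u+2,\dots,u+k\}\subseteq P$, using $k\le n-k$. Symmetrically, for the left boundary $w$, the cycle-neighbour $w-1$ lies in $P$, and if $w\in L$ then $c(w)\in\{w-k,\dots,w-2\}\subseteq P$. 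Hence it suffices to find a vertex $u\in R$ with $u-k\in L$.

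The core step is then a short orbit argument. Suppose no such $u$ exists; since every vertex lies in $L$ or in $R$, this says $u\in R\implies u-k\in R$, i.e.\ $R$ is invariant under the permutation $\tau\colon x\mapsto x-k\pmod n$. Because $\gcd(n,k)=1$, the map $\tau$ is a single $n$-cycle on $V$, so its only invariant subsets are $\emptyset$ and $V$; this contradicts $|R|=\frac{n}{2}$. Therefore some $u\in R$ satisfies $u-k\in L$, and by the boundary analysis $P=\{u,u+1,\dots,u-k\}$ is an almost good shift, as required.

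I expect the main obstacle to be the boundary-degree bookkeeping rather than the counting: one must verify carefully that for $n=3k\pm1$ the chord of an $R$-vertex (respectively an $L$-vertex) lands strictly inside $P$ and never in the removed block, which rests on the inequalities $2\le k-1$ and $k\le n-k$ holding in exactly the no-good-shift regime. A secondary point to confirm is that $\gcd(n,k)=1$ genuinely fails in the excluded case $n=3k$ (there $\gcd(n,k)=k$), which is precisely why \cref{corollary:gcd-odd} must be invoked first to discard it; without that step the single-cycle conclusion, and hence the whole argument, would break down.
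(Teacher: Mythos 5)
Your proof is correct, and its core step takes a genuinely different route from the paper's. The paper proves this lemma by case analysis on the two types RLRL and RRLL (obtained from $\gcd(k+1,|V|)\in\{2,4\}$ via \cref{lemma:gcd-cases}), exhibiting an explicit almost good shift in each case ($\{0,1,\dots,-k\}$, resp.\ $\{1,2,\dots,-k+1\}$) and verifying the endpoint conditions by a parity argument in the RLRL case and a residue computation modulo $k+1$ in the RRLL case. You bypass the dichotomy entirely: after the same reduction as the paper -- $|V|\in\{3k-1,3k+1\}$ via \cref{corollary:gcd-odd}, and $V=L\cup R$ with $|L|=|R|=\frac{|V|}{2}$ via \cref{corollary:propagation of L/R} -- you observe that $\gcd(|V|,k)=1$ (coprimality with $k$, not with $k+1$), so $x\mapsto x-k$ is a single $|V|$-cycle on the vertices; if no $u\in R$ had $u-k\in L$, then $R$ would be invariant under this cycle, forcing $R\in\{\emptyset,V\}$ and contradicting $|R|=\frac{|V|}{2}$. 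Your boundary bookkeeping is also sound: $u\in R$ and $u-k\in L$ do give $d_P(u)=d_P(u-k)=2$ for $P=\{u,\dots,u-k\}$, since the chords land inside $P$ whenever $2k\leq|V|$, and your shift even has the same structural signature as the paper's (first endpoint in $R$, last endpoint in $L$), so it could be substituted into the proof of \cref{theorem:hamiltoniancubic} after relabelling the cycle. What your argument buys is uniformity: no case split and no without-loss-of-generality conventions about which residue classes modulo $k+1$ make up $L$ and $R$. What the paper's version buys is a canonically placed almost good shift inside the RLRL/RRLL framework, which its subsequent proofs (\cref{lemma:hamiltonian<20}, \cref{theorem:hamiltoniancubic}) reuse when selecting the vertex to delete. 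One shared blemish, not held against you: both proofs implicitly need $k\geq 3$ so that the parameter $k-1\geq 2$ is admissible; the degenerate case $|V|=4$ (the graph $K_4$, which has no good shift only because no shift parameter is admissible there) is covered by neither argument and is handled by the paper only separately, in \cref{lemma:hamiltonian<20}.
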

  \begin{proof}
      Let $G=(V,E)$ be a Hamiltonian cubic graph with no good shift, $k = \ceil{\frac{|V|-1}{3}}$ and $d := \gcd(k+1,|V|)$.
      Since $G$ has no good shift, according to \cref{lemma:gcd-cases} and \cref{corollary:gcd-odd}, $d \in \{2,4\}$ and $|V| = 3k-1$ or $|V| = 3k+1$.
      From \cref{corollary:propagation of L/R}, we know that each vertex in $V$ belongs to either $L$ or $R$.
      \begin{compactitem}
          \item Case 1: \textit{$G$ is of type RLRL}.
          Let $P:=\{0,1,\cdots,-k\}$.
          Since $|V|$ is even, then $|P|$ is even.
          Therefore, since two vertices $i,i+1 \in P$ do not both belong to $L$ or $R$, then the vertex $-k$ belongs to $L$.
          Then the set $P$ fulfills the requirements.

          \item Case 2: \textit{$G$ is of type RRLL}.
          Consider the set $P:=\{1,2,\cdots , -k+1\}$.
          According to \cref{lemma:gcd-cases}, since $d=4$, $|V|=3k-1$.
          Hence, $-k+1=2-(k+1)\in \;<\!2\!>$.
          Thus, $-k+1\in L$ and $P$ fulfills the requirements.
      \end{compactitem}
  \end{proof}

	Recall that the graphs $H_1$ and $H_2$ from  \cref{fig:counterexample-hamiltonian} have no proportionally dense subgraph of the maximum possible size.
	In \cref{theorem:hamiltoniancubic},
	we show that these are the only cubic Hamiltonian graphs with this property.

	Before proving the main theorem, we first deal with small graphs ($|V|<20$) that are particular cases that need to be treated independently.

	\begin{lemma}\label{lemma:hamiltonian<20}
		Let $G=(V,E)$ be a Hamiltonian cubic graph not isomorphic to $H_1$ or $H_2$ with $|V|<20$. Then there exists a PDS of size $\floor{\frac{2\cdot|V|+1}{3}}$ in $G$.
	\end{lemma}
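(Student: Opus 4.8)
The plan is to reduce the statement to a finite check. First, if $G$ has a good shift, then by the observations following the definition of a shift, $G$ already contains a PDS of size $\floor{\frac{2\cdot|V|+1}{3}}$ and we are done. So assume $G$ has no good shift. By \cref{corollary:gcd-odd} this forces $|V| \neq 3k$, and since cubic graphs have even order, the even values of $|V|$ below $20$ that survive are $|V| \in \{8,10,14,16\}$; the degenerate order $|V| = 4$ (where $k=1$ and shifts are undefined) forces $G \cong K_4$, whose triangles are PDS of size $3 = \floor{\frac{9}{3}}$, so it can be dismissed directly. For each remaining order, \cref{lemma:goodshift+1} supplies an almost good shift $P$, of size $\floor{\frac{2\cdot|V|+1}{3}}+1 = |V|-k+1$.

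Second, I would reformulate the goal combinatorially. Writing $|S| = |V|-k$ and $|\overbar{S}| = k$, and using that $G$ is cubic so $d_S(u)+d_{\overbar{S}}(u)=3$, a short computation with $k = \ceil{\frac{|V|-1}{3}}$ shows that for $|V| \geq 8$ a set $S$ of this size induces a PDS \emph{if and only if} every vertex of $S$ has at least two neighbors inside $S$: a vertex with $d_S(u)=2$ is always satisfied because $3k \geq |V|-1$, while a vertex with $d_S(u)\leq 1$ always fails because $k < \frac{2|V|-2}{3}$. Since $|P| = |V|-k+1$ is exactly one more than the target size, it therefore suffices to delete a single vertex $v$ from the almost good shift $P$ so that no remaining vertex drops below two internal neighbors. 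As every vertex of a shift already has $d_P \geq 2$, this amounts precisely to choosing $v$ so that all of its $P$-neighbors satisfy $d_P = 3$, i.e.\ have their entire (closed) neighborhood inside $P$.

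Third, I would locate such a vertex $v$ from the chord structure. The vertices of $P$ with $d_P=2$ are the two shift endpoints together with the interior vertices whose chord $c(\cdot)$ leaves $P$; since $|\overbar{P}| = k-1$ is small these ``boundary'' vertices are few, and the type-RLRL/type-RRLL dichotomy introduced before \cref{lemma:goodshift+1} tightly constrains where the crossing chords land. For $|V| \in \{10,14,16\}$ the shift $P$ is long enough that some interior vertex $v$, together with its three neighbors, lies strictly inside $P$; deleting this $v$ then yields the desired PDS of size $\floor{\frac{2\cdot|V|+1}{3}}$. The main obstacle is the order $|V|=8$, where $k=3$, so $P$ has only six vertices and $\overbar{P}$ only two and the slack disappears entirely. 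Here I would enumerate the finitely many chord matchings on the Hamiltonian cycle $(0,1,\dots,7)$ that produce no good shift, test for each whether some single-vertex deletion leaves every remaining vertex with two internal neighbors, and verify that exactly two of them admit no such deletion. Matching these two graphs with $H_1$ and $H_2$ of \cref{fig:counterexample-hamiltonian} finishes the proof.
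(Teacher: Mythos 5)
Your overall skeleton is the same as the paper's: dispose of $|V|\in\{4,6,12,18\}$ via $K_4$ and \cref{corollary:gcd-odd}, and for the surviving orders $\{8,10,14,16\}$ take an almost good shift $P$ from \cref{lemma:goodshift+1} and delete one vertex. Your step-2 reformulation is correct and clean: for cubic $G$ and $|S|=|V|-k$ with $k=\ceil{\frac{|V|-1}{3}}$, $G[S]$ is a PDS if and only if every vertex of $S$ has at least two neighbors in $S$; this is exactly the criterion the paper uses implicitly when it observes that a deletable vertex $u^*$ must satisfy $c(u^*-1),c(u^*+1)\in P$ and either $c(u^*)\notin P$ or $d_P(c(u^*))=3$. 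Your plan for $|V|=8$ (enumerate the no-good-shift chord configurations and match them with $H_1$, $H_2$) is also sound and is what the paper does when it identifies the type-RRLL graph with $H_1$ and the type-RLRL graph with $H_2$.

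The genuine gap is step 3 for $|V|\in\{10,14,16\}$: you assert a deletable vertex exists but never exhibit it, and both justifications you offer are unsound. First, the condition you use there --- ``some interior vertex $v$, together with its three neighbors, lies strictly inside $P$'' --- is not the condition you correctly derived in step 2; deletion of $v$ constrains the \emph{second} neighborhood of $v$ (the chords of $v-1$, $v+1$, and of $c(v)$), not just its own neighborhood. Second, the ``$P$ is long enough'' heuristic cannot distinguish these orders from the exceptional order $8$: for $|V|=10$ one has $k=3$, so the almost good shift excludes exactly $k-1=2$ vertices, precisely the same slack as for $|V|=8$, where the conclusion genuinely fails for $H_1$ and $H_2$. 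Worse, no counting argument of this kind can work: on $10$ vertices, a chord configuration with $c(4),c(5)\in\{9,0\}$ (taking $P=\{1,\dots,8\}$) admits no deletable vertex whatsoever, so one must use the no-good-shift hypothesis to rule such configurations out. That is exactly what the paper does and what your sketch omits: under that hypothesis the chord structure is forced --- for $|V|=10$ the graph is completely determined ($c(0)=3$, $c(1)=8$, $c(2)=5$, $c(4)=7$, $c(6)=9$, giving the PDS $V\setminus\{0,6,9\}$), for $|V|=16$ it is determined by the choice $c(0)\in\{3,5\}$, and for $|V|=14$ a short case analysis on whether $c(3)=0$ and $c(6)=9$ produces the vertex to delete. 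Until this finite but nontrivial verification is carried out, the cases $10$, $14$, $16$ remain unproven.
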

    \begin{proof}
        Let $k = \ceil{\frac{|V|-1}{3}}$.
        Since $G$ is cubic, its number of vertices is even.
        From \cref{lemma:gcd-cases}, $\gcd(k+1,|V|)\in \{1,2,3,4\}$. If $\gcd(k+1,|V|)\in \{1,3\}$, then there exists a good shift from \cref{corollary:gcd-odd}.
        We then suppose that $\gcd(k+1,|V|)\in \{2,4\}$.
        The following cases remain:

        \begin{compactitem}
            \item If $|V|=4$, then $G$ is the complete graph $K_4$, and any set of $3$ vertices induces a PDS of size $\lfloor \frac{2\cdot 4+1}{3} \rfloor$.
            
            \item If $|V|=8$, we claim that $G$ must have a good shift. By contradiction, suppose that $G$ has no good shift.
            If $G$ is of \textit{type RRLL} then $G$ is isomorphic to $H_1$, and if $G$ is of \textit{type RLRL} then $G$ is isomorphic to $H_2$, which is impossible since we assumed that $G$ is not isomorphic to $H_1$ or $H_2$.

            \item If $|V|=10$ and $G$ has no good shift, since $\gcd(k+1,|V|)=2$, $G$ is necessarily of \textit{type RLRL} and $c(0)=3$, $c(1)=8$, $c(2)=5$, $c(4)=7$, $c(6)=9$.
            In this case, $V\setminus \{0,6,9\}$ induces a PDS of size $\lfloor \frac{2\cdot 10+1}{3} \rfloor$.

            \item If $|V|=14$, if $G$ has no good shift, since $\gcd(k+1,|V|)=2$, then $G$ is necessarily of \textit{type RLRL}.
            Following \cref{lemma:goodshift+1}, let $P:=\{0,1,\cdots ,9\}$ be an almost good shift and:

            \begin{compactitem}
                \item If $c(6)\neq 9$, notice that $c(7),c(5)\in P$ (since $5,7\in L$) and $c(6)\in V\setminus P$. Thus, $G[P\setminus \{6\}]$ is a PDS of size $\lfloor \frac{2\cdot 14+1}{3} \rfloor$.
                If $c(3)\neq 0$, the case is symmetrical.
                
                \item If $c(3)= 0$ and $c(6)= 9$, notice that $c(3)\in P$, $c(5)\in P$ and $d_P(c(4))=3$ since $c(4)\neq 9)$. Thus, $G[P\setminus \{4\}]$ is a PDS of size $\lfloor \frac{2\cdot 14+1}{3} \rfloor$.
            \end{compactitem}
            \item If $|V|=16$, if $G$ has no good shift, since $\gcd(k+1,|V|)=2$, $G$ is necessarily of \textit{type RLRL}. Following \cref{lemma:goodshift+1}, let $P:=(0,1,\cdots ,-k)$ be an almost good shift. Since $0\in R$, we have either $c(0)=3$ or $c(0)=5$. In each case, the graph is completely determined due to the constraints. In the first case, $P\setminus \{4\}$ induces a PDS of size $\lfloor \frac{2\cdot 16+1}{3} \rfloor$. In the second case, $P\setminus \{3\}$ induces a PDS of size $\lfloor \frac{2\cdot 16+1}{3} \rfloor$.
        \end{compactitem}

        In each case, if $G$ is not isomorphic to $H_1$ or $H_2$, then either $G$ has a good shift which is a PDS of size $\floor{\frac{2\cdot|V|+1}{3}}$, or we give a PDS of such size.
    \end{proof}

	\begin{theorem}\label[theorem]{theorem:hamiltoniancubic}
		Let $G=(V,E)$ be a Hamiltonian cubic graph not isomorphic to $H_1$ or $H_2$.
		Then there exists a connected PDS of size $\floor{\frac{2\cdot|V|+1}{3}}$ in $G$.
	\end{theorem}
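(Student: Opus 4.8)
The plan is to split the argument into three regimes and reduce everything to producing a single PDS of the target size $\floor{\frac{2\cdot|V|+1}{3}}$. First, if $|V|<20$ the statement is exactly \cref{lemma:hamiltonian<20}, so I may assume $|V|\geq 20$ (in particular $G$ is then automatically distinct from $H_1$ and $H_2$). Second, if $G$ contains a good shift $P$, then by definition $G[P]$ is already a PDS of size $\floor{\frac{2\cdot|V|+1}{3}}$, and since $P$ is an arc of the Hamiltonian cycle it is connected; nothing more is needed. Hence the interesting case is $|V|\geq 20$ with no good shift, and the whole proof concentrates there.

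When $G$ has no good shift I would first collect the structural consequences already established: by \cref{corollary:gcd-odd} and \cref{lemma:gcd-cases} we must have $|V|=3k-1$ or $|V|=3k+1$ with $\gcd(k+1,|V|)\in\{2,4\}$, by \cref{corollary:propagation of L/R} every vertex lies in $L$ or in $R$, and $G$ is of type RLRL or RRLL. \cref{lemma:goodshift+1} then supplies an almost good shift $P$, that is, an arc of the Hamiltonian cycle of size exactly $\floor{\frac{2\cdot|V|+1}{3}}+1$ in which every vertex has at least two neighbours inside $P$ and the two endpoints have exactly two. The goal becomes: delete one vertex $v$ from $P$ so that $G[P\setminus\{v\}]$ is a connected PDS of size $\floor{\frac{2\cdot|V|+1}{3}}$.

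The deletion criterion is cheap to state. For a set $S$ of the target size one has $|S|-1=|V|-k-1$ and $|\overbar{S}|=k$, so the very inequality already used to certify good shifts, namely $\frac{2}{|V|-k-1}\geq\frac{1}{k}$, shows that any vertex with at least two neighbours in $S$ is satisfied. Since deleting $v$ only lowers the in-degrees of the neighbours of $v$, it suffices to choose $v$ so that every neighbour of $v$ lying in $P$ keeps at least two neighbours in $P\setminus\{v\}$; concretely, I look for an interior vertex $v$ (not an endpoint and not adjacent to one) whose chord $c(v)$ leaves $P$ while the chords $c(v-1)$ and $c(v+1)$ of its two cycle-neighbours stay inside $P$. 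Then the only affected vertices are $v-1$ and $v+1$, each of degree $3$ in $G[P]$, so each retains two neighbours; and connectivity is automatic, since the backward chord $c(v+1)$ lands in the left arc $\{0,\dots,v-1\}$ and reconnects the two pieces into which the deletion splits $P$.

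What remains, and what I expect to be the main obstacle, is to prove that such a $v$ always exists for $|V|\geq 20$, treating RLRL and RRLL separately. In type RLRL the membership in $L$/$R$ alternates with parity, forcing every chord to point a bounded odd distance ($\leq k$) forward for $R$ and backward for $L$; reading $P$ inwards from the endpoint lying in $R$, the forward chords of the $R$-vertices near the far endpoint are forced out of $P$ while the backward chords of their $L$-neighbours remain inside, and a short counting argument on how many chords can escape through the $k-1$ vertices of $\overbar{P}$ guarantees a usable $v$ that stays clear of both endpoints; the type RRLL case, which by \cref{lemma:gcd-cases} only occurs for $|V|=3k-1$, is handled by the analogous bookkeeping around the unique place where the $L/R$ pattern repeats. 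The delicate point throughout is guaranteeing \emph{simultaneously} the escaping chord at $v$, the two internal chords at $v\pm1$, and that $v$ avoids the endpoints — precisely the obstruction that forces the separate small-graph analysis of \cref{lemma:hamiltonian<20} and the RLRL/RRLL split, whereas the PDS inequality (via \cref{lemma:upperbound-PDS} for optimality of the size) and the connectivity check are then immediate.
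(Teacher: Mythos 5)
Your skeleton matches the paper's: dispatch $|V|<20$ via \cref{lemma:hamiltonian<20}, observe a good shift is already a connected PDS of the right size, and otherwise take an almost good shift $P$ (\cref{lemma:goodshift+1}) and delete one vertex. Your satisfaction criterion is also sound: since $|P\setminus\{v\}|=|V|-k$ and $|\overbar{P}|+1=k$, any vertex retaining two neighbours inside the set is satisfied, so it suffices to control the neighbours of the deleted vertex. But the proof stops exactly where the real work begins. The existence of a deletable vertex $v$ is not established; you defer it to ``a short counting argument on how many chords can escape'' and ``analogous bookkeeping,'' neither of which is carried out. This existence claim is the entire content of the theorem beyond the lemmas already proved — the paper spends the bulk of its proof on it, exhibiting \emph{explicit} candidates (in type RLRL: vertex $3$ if $c(3)\neq 0$, symmetrically $-k-3$, and otherwise $k-2$; in type RRLL: $k-2$ or $k-1$) and verifying the criterion for each using the forced directions of chords of $L$- and $R$-vertices. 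A reader cannot reconstruct that case analysis from your sketch, so the proposal has a genuine gap rather than an alternative proof.

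There is also a substantive reason to doubt that your particular counting plan closes the gap as stated: your deletion criterion is strictly stronger than necessary, demanding $c(v)\notin P$, whereas the correct (and needed) criterion is ``$c(v)\notin P$ \emph{or} $d_P(c(v))=3$.'' The paper's RLRL analysis hits a configuration ($c(3)=0$ and $c(-k-3)=-k$) in which the chosen vertex $k-2$ may well have its chord inside $P$, and satisfaction is rescued only by the alternative $d_P(c(k-2))=3$; insisting on an escaping chord at $v$ could leave you with no admissible vertex, or at least forces an argument you have not given. A smaller imprecision: connectivity is not ``automatic'' from $c(v+1)$ landing in the left arc — that holds only when $v+1\in L$; in general one needs the disjunction that either $c(v+1)$ lands in $\{0,\dots,v-1\}$ or $c(v-1)$ lands in $\{v+1,\dots,-k\}$, which is what the paper checks in both types.
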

	\begin{proof}
        If $|V|<20$, then there is a PDS of size $\floor{\frac{2\cdot|V|+1}{3}}$ in $G$ from \cref{lemma:hamiltonian<20}. 		  Now we suppose that $|V|\geq 20$, which implies that $k:=\lceil \frac{|V|-1}{3}\rceil\geq 7$.

        From \cref{lemma:gcd-cases}, $\gcd(k+1,|V|)\in \{1,2,3,4\}$.
        If $\gcd(k+1,|V|)\in \{1,3\}$, then there exists a good shift (\cref{corollary:gcd-odd}).

        We suppose that $\gcd(k+1,|V|)\in \{2,4\}$.
        If $G$ contains a good shift, then the proof is done. Notice that in such case, the PDS is obviously connected.
        Now, we assume that $G$ has no good shift.
        We prove that given an almost good shift $P$, there exists a vertex $u^* \in P$ such that $G[P\setminus \{u^*\}]$ is a PDS.
        Observe that such vertex $u^*$ exists if and only if $c(u^*-1),c(u^*+1)\in P$, and either $c(u^*) \in V \setminus P$ or $d_P(c(u^*))=3$.

        \begin{compactitem}
            \item If $G$ is of \textit{type RLRL}, then $R=\;<0>\cup <2>$ and $L=\;<1>\cup <3>$.
            According to \cref{lemma:goodshift+1}, the set $P := \{0,1,2,\cdots ,-k\}$ is an almost good shift and $0\in R, 1\in L$.
            Since $2\in R$ and $4\in R$, then $c(2)\in P$ and $c(4)\in P$.
            If $c(3)\neq 0$, then $c(3) \in V \setminus P$ since $3\in L$.
            Thus, $G[P\setminus \{3\}]$ is a PDS of size $\lfloor \frac{2\cdot|V|+1}{3} \rfloor$.
            Symmetrically, if $c(-k-3)\neq -k$, then $c(-k-3) \in V \setminus P$ since $3\in R$. Thus, $G[P\setminus \{-k-3\}]$ is a PDS of size $\lfloor \frac{2\cdot|V|+1}{3} \rfloor$.
            On the other hand, if $c(3)=0$ and $c(-k-3)= -k$, then $c(k-1)\neq -k$ and $c(k-1)\in P$.
            Moreover, since $k-3\in R$ then $c(k-3)\in P$.
            Therefore, $c(k-2) \in V \setminus P$ or $d_P(c(k-2))=3$ (since $k\geq 7$, $k-2\neq 3$ and $c(k-2)\neq 0)$.
            Thus, $G[P\setminus \{k-2\}]$ is a PDS of size $\lfloor \frac{2\cdot|V|+1}{3} \rfloor$.
            Notice that the resulting PDS is connected.
            Indeed, let $v$ be the vertex we removed from the path $\{0,1,\cdots ,-k\}$.
            It is easy to see that, either $c(v-1)\in \{v+1,v+2,\cdots , -k\}$, or $c(v+1)\in \{0,1,\cdots , v-1\}$ since the graph is of type $RLRL$, and thus the PDS is connected.

            \item If $G$ is of \textit{type RRLL}, then $R=\;<\!0\!>\cup <\!1\!>$ and $L=\;<\!2\!>\cup<\!3\!>$.
            According to \cref{lemma:goodshift+1}, the set $P := \{1,2,\cdots ,-k+1\}$ is
            an almost good shift
            and $1\in R, 2\in L, -k\in R, -k+1 \in L$.
            Since $k+1\in \;<0>$ and $k+2\in \;<1>$, we necessarily have $k-1,k\in L$ and $k+1,k+2\in R$.
            In this case, notice that since $k\geq 7$, $\{k-3,k-2,k-1\}\in P$.
            Moreover, $k-3,k-2\in R$, which implies $c(k-3),c(k-2)\in P$.
            We show that either $c(k-1)\in P$ or $c(k)\in P$.
            Suppose that $c(k)\notin P$.
            Then since $k\in L$, we have $c(k)=0$.
            Since $k-1\in L$, we have $c(k-1)\in \{-1,0,1,\cdots ,k-3\}$.
            Since $0=c(k)$ and $-1\in L$, then $c(k-1)\neq -1$ and $c(k-1)\neq 0$.
            Thus, $c(k-1)\in \{1,2,...,k-3\}\subset P$.
            Thus, either $c(k-1)\in P$ or $c(k)\in P$.
            Now, if $c(k-1)\in P$, then since $c(k-3)\in P$, the set $G[P\setminus \{k-2\}]$ is a PDS of size $\lfloor \frac{2\cdot|V|+1}{3} \rfloor$.
            Else, $c(k)\in P$ and then since $c(k-2)\in P$, the set $G[P\setminus \{k-1\}]$ is a PDS of size $\lfloor \frac{2\cdot|V|+1}{3} \rfloor$.
            Notice that the resulting PDS is connected.
            Indeed, let $v$ be the vertex we removed from the almost good path $\{1,2,\cdots ,-k+1\}$.
            Again, it is easy to verify that either $v=k-2$, and then $c(k-3) \in \{k-1,k,\cdots , -k+1\}$, or $v=k-1$, and then $c(k)\in \{1,2,\cdots ,k-2\}$ since the graph is of type $RRLL$.
            Thus the PDS is connected.
        \end{compactitem}
    \end{proof}

	According to \cref{lemma:upperbound-PDS}, a PDS in a cubic graph of order $n$ contains at most $\floor{ \frac{2n+1}{3} }$ vertices.
	Thus, we obtain the following corollary.

	\begin{corollary}
	Let $G$ be a Hamiltonian cubic graph with a given Hamiltonian cycle.
	Then a connected proportional dense subgraph of maximum size in $G$ can be found in linear time.
	\end{corollary}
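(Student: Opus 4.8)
The plan is to combine the optimality bound of \cref{lemma:upperbound-PDS} with the existence result of \cref{theorem:hamiltoniancubic}, and then to argue that the latter's constructive proof already constitutes a linear-time procedure. By \cref{lemma:upperbound-PDS}, any PDS in a cubic graph of order $n$ has at most $\floor{\frac{2n+1}{3}}$ vertices, so any connected PDS attaining this bound is automatically of maximum size; by \cref{theorem:hamiltoniancubic}, such a PDS exists whenever $G$ is not isomorphic to $H_1$ or $H_2$. Hence the whole task reduces to (i) detecting and dealing with the two exceptional graphs, and (ii) turning the explicit constructions used in the proof of \cref{theorem:hamiltoniancubic} into an $\O(n)$ algorithm.

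First I would dispose of the exceptional cases. Since $H_1$ and $H_2$ have exactly eight vertices, testing whether $G$ is isomorphic to one of them is a constant-time check, and for each of these fixed graphs a connected PDS of maximum size (which is strictly smaller than $\floor{\frac{2\cdot 8+1}{3}}=5$) can be precomputed and returned. Likewise, the small graphs with $|V|<20$ treated in \cref{lemma:hamiltonian<20} are of constant size and cost $\O(1)$.

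For every remaining graph I would follow the proof of \cref{theorem:hamiltoniancubic} step by step, checking that each step costs only $\O(n)$. Using the supplied Hamiltonian cycle I would label the vertices $0,1,\dots,n-1$ and, in a single pass, record for each vertex $v$ its chord neighbour $c(v)$ together with its membership in $L$ or $R$; computing $k=\ceil{\frac{n-1}{3}}$ and $d=\gcd(k+1,n)$ then takes constant time. If $d\in\{1,3\}$, a good shift exists by \cref{corollary:gcd-odd}, and by \cref{lemma:if not L then neighbor is R} I can locate one by scanning for a vertex $u$ with $u\notin L$ and $u-(k+1)\notin R$; this good shift is a cyclic interval and is output in $\O(n)$. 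If $d\in\{2,4\}$ and $G$ has no good shift, I would determine the type (RLRL or RRLL) by inspecting a constant number of consecutive vertices, form the explicit almost good shift $P$ supplied by \cref{lemma:goodshift+1}, and then follow the case analysis in the proof of \cref{theorem:hamiltoniancubic} to identify the single vertex $u^*\in P$ whose removal yields a PDS. The resulting set $P\setminus\{u^*\}$ is again essentially a cyclic interval and is output in $\O(n)$.

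The step I expect to be the main obstacle is confirming that the constructions in \cref{theorem:hamiltoniancubic} are genuinely local and deterministic rather than merely existential. Fortunately, for each graph type the proof singles out a specific candidate vertex (for instance vertex $3$, $-k-3$, or $k-2$ in the RLRL case, and $k-2$ or $k-1$ in the RRLL case) and certifies its eligibility by testing a bounded number of chord endpoints $c(\cdot)$. Since each such test touches only $\O(1)$ of the precomputed values, the vertex $u^*$ is determined in $\O(1)$ time after the initial $\O(n)$ preprocessing, with no search required; and connectivity of the output need not be re-examined, as \cref{theorem:hamiltoniancubic} already guarantees it. Assembling these pieces yields an overall running time of $\O(n)$, which is linear in the input size of a cubic graph.
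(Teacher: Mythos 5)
Your proposal is correct and follows essentially the same route as the paper, which leaves the corollary's proof implicit: the maximality follows from \cref{lemma:upperbound-PDS}, the existence and connectivity from \cref{theorem:hamiltoniancubic} (with $H_1$, $H_2$ and the $|V|<20$ cases handled in constant time), and the linear running time from unpacking the constructive case analysis exactly as you do (precomputing $c(\cdot)$ and $L$/$R$ membership, scanning for a good shift, otherwise removing the explicitly designated vertex from an almost good shift). Your explicit verification that each step costs $\O(n)$ is a faithful elaboration of what the paper asserts without detail.
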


\section{Conclusion and open problems}\label{section:conclusion}

We prove that {\MClong} is \APX-hard even on split graphs, and \NP-hard on bipartite graphs, whether the PDS is required to be connected or not.
Furthermore, the problem is proved to be $(2-\frac{2}{\Delta+1})$-approximable, where $\Delta$ is the maximum degree of the graph.
We also show that deciding if a PDS is inclusion-wise maximal is \textsf{co-NP}-complete, even on bipartite graphs.
Nevertheless, {\MC} can be solved in linear time on Hamiltonian cubic graphs if a Hamiltonian cycle is given.

However, the complexity of finding a PDS of maximum size in cubic graphs remains unknown.
More specifically, the question whether a PDS of size $\floor{ \frac{2n+1}{3} }$ always exists in a cubic graph is still open (except for the two graphs given in \cref{fig:counterexample-hamiltonian}).
Also, \cref{algo:PDS at least n/2} returns a PDS of size $\ceil{\frac{n}{2}}$ or $\ceil{\frac{n}{2}}+1$ (in linear time), but the PDS may not be connected.
An interesting open question is whether there is always a connected PDS of size at least $\ceil{\frac{n}{2}}$.
Finally, the parameterized complexity of finding a PDS of size at least $\ceil{\frac{n}{2}} + k$ is unknown.
 
\bibliographystyle{abbrvnat}
\bibliography{ref}
\end{document}